\def\BibTeX{{\rm B\kern-.05em{\sc i\kern-.025em b}\kern-.08em
    T\kern-.1667em\lower.7ex\hbox{E}\kern-.125emX}}
\definecolor{mygreen}{rgb}{0, 0.65, 0.3}
\newtheorem{proposition}{Proposition}
\newtheorem{lemma}{Lemma}
\begin{document}
\title{Energy Efficiency Optimization in Active Reconfigurable Intelligent Surface-Aided Integrated Sensing and Communication Systems}
\author{Junjie~Ye, \IEEEmembership{Graduate Student Member,~IEEE,}  Mohamed~Rihan, \IEEEmembership{Senior Member,~IEEE},  Peichang~Zhang,  Lei~Huang, \IEEEmembership{Senior Member,~IEEE,}  Stefano~Buzzi, \IEEEmembership{Senior Member,~IEEE,}  Zhen~Chen, \IEEEmembership{Senior Member,~IEEE}
\thanks{Copyright (c) 20xx IEEE. Personal use of this material is permitted. However, permission to use this material for any other purposes must be obtained from the IEEE by sending a request to pubs-permissions@ieee.org. }
\thanks{Junjie Ye, Peichang Zhang, Lei Huang and Zhen Chen are with State Key Laboratory of Radio Frequency Heterogeneous Integration, Shenzhen University, Shenzhen, China; M. Rihan is with Department of Communications Engineering, University of Bremen, Bremen, Germany and with Department of Electronics and Electrical Communication Engineering, Faculty of Electronic Engineering, Menouf, Egypt; S. Buzzi is with the Department of Electrical and Information Engineering, University of Cassino and Southern Lazio, Cassino, Italy, and with Dipartimento di Elettronica, Informazionee Bioingegneria, Politecnico di Milano, Milan, Italy. (e-mail: 2152432003@email.szu.edu.cn;  mohamed.elmelegy@el-eng.menofia.edu.eg; pzhang@szu.edu.cn; lhuang@szu.edu.cn; buzzi@unicas.it; chenz.scut@gmail.com.) (Corresponding author: Lei Huang; Peichang Zhang.)}
\thanks{This work is supported in part by the Key Project of International Cooperation and Exchanges of the National Natural Science Foundation of China under Grant 62220106009, the National Science Fund for Distinguished Young Scholars under Grant 61925108,  the project of Shenzhen Peacock Plan Teams under Grant KQTD20210811090051046 and Research Team Cultivation Program of Shenzhen University under Grant 2023DFT003. The work of Peichang Zhang is supported in part by the Foundation of Shenzhen under Grant 20220810142731001, 20200823154213001. The work of Stefano Buzzi  is supported by the European Union under the Italian National Recovery and Resilience Plan (NRRP) of NextGenerationEU, partnership on “Telecommunications of the Future” (PE00000001 - program “RESTART”, Structural Project SRE). }}

\maketitle
\begin{abstract}
Energy efficiency (EE) is a challenging task in integrated sensing and communication (ISAC) systems, where high spectral efficiency and low energy consumption appear as conflicting requirements. Although passive reconfigurable intelligent surface (RIS) has emerged as a promising technology for enhancing the EE of the ISAC system, the multiplicative fading feature hinders its effectiveness. This paper proposes the use of active RIS with its amplification gains to assist the ISAC system for EE improvement. Specifically, we formulate an EE optimization problem in an active RIS-aided ISAC system under system power budgets, considering constraints on user communication quality of service and sensing signal-to-noise ratio (SNR). A novel alternating optimization algorithm is developed to address the highly non-convex problem by employing the generalized Rayleigh quotient optimization, semidefinite relaxation (SDR), and the majorization-minimization (MM) framework. Furthermore, to reduce computational complexity, we derive a semi-closed form for eigenvalue determination. Numerical results demonstrate the effectiveness of the proposed approach, showcasing significant improvements in EE compared to both passive RIS and spectrum efficiency optimization cases.
\end{abstract}

\begin{IEEEkeywords}
Integrated sensing and communication, Active RIS, Energy efficiency, Generalized Rayleigh quotient optimization, Semi-definite relaxing, Majorization-minimization
\end{IEEEkeywords}

\section{Introduction}
\IEEEPARstart{T}{he}  widespread adoption of fifth-generation (5G) wireless communication has brought significant advancements, including massive multi-input multi-output (MIMO), millimeter-wave (mmWave) communication, and ultra-dense networks, enabling high data transmission rates, low latency, and massive device access \cite{intro_5G}. However, these advancements have also introduced new challenges, especially the serious spectrum scarcity and high energy consumption \cite{6G_challenge}. With the exponential growth of connected devices and the increasing demand for high-bandwidth applications, the available spectrum is rapidly becoming insufficient. Additionally, the pursuit of high spectrum efficiency (SE) often comes at the expense of high energy consumption, which leads to a low energy efficiency (EE) and is not aligned with the vision of green communication. Therefore, addressing these challenges is crucial for the sustainable development of 6G networks.

To address the growing spectrum scarcity,  integrated sensing and communication (ISAC) has emerged, enabling communication systems to share the spectrum with radar systems \cite{ISAC_concept}. ISAC can be categorized into three main types, namely, coexistence, collaboration, and joint design \cite{ISAC_3_theme}. In coexistence ISAC, the communication and radar systems share the spectrum, with interference being a primary concern, which can be solved by opportunistic spectrum access \cite{Opportunistic_access}, null space projection \cite{NSP}, and transceiver design \cite{ISAC_interf_Rihan}, etc. Collaboration ISAC involves sharing both spectrum and essential knowledge, such as target direction and communication symbols, between the two systems. The shared information enables more thorough interference cancellation. In \cite{Com_Sensing_collaborate}, a typical collaboration system was studied where communication signals assisted environment sensing, and sensing results aided symbol detection. Joint design ISAC integrates the systems comprehensively by sharing spectrum, hardware, waveforms, and other components simultaneously, where the key challenge lied in a common waveform designs and circuit sharing. In \cite{ISAC_LF}, the authors designed probing beampatterns while guaranteeing downlink communication performance. In \cite{Joint_Transmit_Beamforming}, a joint transmit beamforming model was proposed by designing the weighted sum of independent radar waveforms and communication symbols. {Furthermore, in \cite{Zhengchuanaffili}, rate-splitting multiple access was employed in ISAC systems to achieve effective multiple access, and radar beampattern design.}  The work in \cite{El3} reduced the consumption of spectrum and hardware resources with three methods for ISAC waveforms design. In \cite{radar_ISAC_LGS}, a joint range and velocity estimation method was developed considering an orthogonal frequency division multiplexing (OFDM) ISAC waveform. In addition, the authors in \cite{El2} designed the waveform with offset quadrature amplitude modulation based OFDM in a ISAC system. {Albeit the above works of joint waveform design of ISAC were effective, EE was not adequately considered. The work in \cite{ISACEE} tried to fill the gap and developed an energy-efficient interference cancellation approach in ISAC systems.}

In parallel with the development of ISAC, reconfigurable intelligent surface (RIS) has also been a candidate in future communication systems. RIS consists of numerous reflecting elements and modifies the propagation channels by intelligently altering the phase of signals \cite{IRS_cz}. This capability has led to a surge of research of RISs in various scenarios. In \cite{RIS_RR_hexin}, a relax-and-retract method was proposed to minimize transmit power under quality-of-service (QoS) constraints. In \cite{IRS_mmWave_fangjun}, the authors exploited the structure of mmWave channels to derive a closed-form solution for single RIS scenarios and a near-optimal analytical solution for multi-RIS scenarios. {Apart from unicast transmission, a RIS-assisted multigroup multicast communication system was investigated in \cite{Zhengchuanthree} and ergodic rate was optimized to effectively improve SE.} {Moreover, the EE maximization of a RIS-assisted multi-user communication system was investigated in \cite{EE_RIS}.} Beyond communication, RISs have also shown great promises in other areas, such as RIS-aided radar detection \cite{RIS_radar}, RIS-aided simultaneous wireless information and power transfer \cite{RIS_SWIPT}, RIS-aided localization \cite{El4} and RIS-aided physical layer security  \cite{RIS_secure}. {
Additionally, comprehensive overviews on signal processing for RIS-aided wireless systems, including channel estimation, data transmission, and localization schemes, can be found in \cite{CunhuaRISSignalProcess}, \cite{SignalProcess}.}

Despite the above numerous benefits, the multiplicative fading effects posed by traditional passive RISs limit the system performance, where the equivalent path loss of the transmitter-RIS-receiver link is the product of the path loss of the transmitter-RIS and RIS-receiver links. To address this issue, active RISs have been proposed, which can amplify the incident signal and modify its phase shift to provide greater performance gains \cite{ARIS_you,ARIS_ori2}. Some studies have revealed the merits of active RISs. In \cite{ARIS_ori1}, an active RIS model was developed, and the RIS beamformer was optimized to maximize the receiving power while minimizing the RIS-related noise. The authors in \cite{Active_RIS_CSI}  optimized average achievable rate considering partial CSI of the RIS-aided channels. {Unlike those works of cell-centric networks, the authors in \cite{Zhengchuanone} considered an active RIS-aided user-centric networks for inter-cell interference reduction.} Beyond SE optimizations, EE maximization was also a attractive topic in RIS systems. {In \cite{ActPasRISCunhua}, the EE of active RIS was compared to that of passive RIS under the same power budget in communication systems.}  In \cite{ARIS_MultiAccess}, active RISs were employed to assist uplink access and transmission from multiple devices in an energy-constrained system.  In \cite{ARIS_sub}, a sub-connected architecture of active RISs was proposed to reduce power consumption and improve EE. In \cite{Rihan_2024}, communication energy efficiency aided by active RIS was maximized under a constraint of power budget for a probing process. These works validated the significance of active RISs for enhancing SE, EE, and other performances in various scenarios. 

Taking full advantages of both technologies, the synergy of ISAC and RISs shows great promises for enhancing SE, EE, and various communication performances. Extensive research has been conducted on RIS-assisted ISAC systems. In \cite{RIS_ISAC_jiang}, the authors jointly optimized transmit beamforming and RIS phase shifts to maximize radar signal-to-noise ratio (SNR) while maintaining communication SNR above a threshold. Similarly, in \cite{RIS_ISAC}, the radar output signal-to-interference-plus-noise ratio (SINR) was maximized using space-time adaptive processing under various communication QoS metrics. In \cite{RIS_ISAC_PD}, the user SNR was maximized under a minimum radar detection probability constraint. The authors in \cite{El1} considered a fair sensing-communication waveform design with RIS by simultaneously maximizing the sensing SINR and minimize the multi-user interference. {Additionally, a rate splitting multiple access-based RIS-assisted ISAC system was proposed in \cite{Zhengchuantwo} to simultaneously achieve target detection and multi-user non-orthogonal multiple access communication.} {In \cite{PassiveRISDFRCTGCN}, an alternative approach was proposed for EE optimization in a passive RIS-aided ISAC system.} However, the multiplicative fading effects of passive RISs still limit the ISAC system performance gains, where active RISs can be introduced to tackled the issue. A few works have studied the potentials of active RIS-assisted ISAC systems. In \cite{JJ}, the authors maximized the sum rate of users in an active RIS-aided ISAC scenario, subject to power budget and detection power constraints. In \cite{ARIS_ISAC_SNR}, an active RIS-assisted ISAC system was designed to maximize radar output SNR while satisfying communication SINR requirements. In \cite{ARIS_ISAC_SNR2}, beamforming design and performance analysis was studied in active RIS-assisted ISAC system. In \cite{ARIS_ISAC_secure}, a secure active RIS-aided ISAC system was proposed, maximizing secrecy rate under minimum radar detection SNR and total power budget constraints. {Although previous studies on the active RIS-aided ISAC system have prioritized achieving exceptional performance in specific areas, they have overlooked the efficiency of energy utilization. Given that ISAC base stations (BS) and active RISs typically consume significant amounts of energy, addressing the relationship between power consumption and the enhancement of communication and sensing performance remains a critical challenge.} 

{In contrast to the outlined background, our research delves into the EE performance within an active RIS-assisted ISAC system. We introduce a novel framework centered around formulating an EE maximization problem and tailoring an alternating optimization approach to efficiently address this challenge. Key contributions of this study encompass:}

\begin{itemize}
\item {In pursuit of sustainable communication practices, this study examines and tailors a comprehensive optimization framework to maximize system EE. This framework encompasses joint design considerations for the transceivers at the BS, amplification, and phase shift matrix of the active RIS. Striking a balance between maintaining sensing and communication performance, the framework imposes constraints on the SNR of the target echo and the SINR of users. Furthermore, it ensures that power consumption from both the transmitter and active RIS remains within given thresholds. }
\item {To effectively tackle the multifaceted optimization problem,  an alternating optimization approach is customized, which decouples the fractional objective function with Dinkelbach algorithm and optimizes each variable alternatively. Additionally, the optimization framework flexibly employes generalized Rayleigh quotient optimization, semidefinite relaxation (SDR) and majorization-minimization (MM) techniques to design the transmit and receiving beamformer, and the amplification and phase shift matrix of the active RIS, respectively. }
\item {To facilitate the dimension reductions of the objective function in MM framework, a surrogate function is selected by proving an inequality associated with eigenvalues. Additionally, with the aim of reducing the high computational complexity arising from eigenvalue decomposition when constructing the surrogate functions, semi-closed forms of the eigenvalues are derived by exploiting the structural characteristics of the matrix.}
\item {Numerical results validate the effectiveness and excellent convergences of the proposed approach, revealing that the integration of active RIS in the ISAC systems improves EE by significantly reducing the power consumption and keeping SE. In addition, the proposed framework achieves remarkable EE enhancements compared to passive RIS scenarios and spectrum optimization cases. }
\end{itemize}

This paper is structured as follows. Section \ref{system model} introduces the system model and formulates the EE maximization problem. Section \ref{algorithm} presents the proposed alternating optimization algorithm for joint design of transceivers and active RIS  beamformer. Section \ref{simulation} provides comprehensive simulation results to demonstrate the effectiveness of the proposed approach. Finally, conclusions are drawn in Section \ref{conclusions}.

\textit{Notation}: Boldface lowercase and uppercase letters represent vectors and boldface, respectively. The operators $(\cdot)^H$, $(\cdot)^T$, and $(\cdot)^*$ denote conjugate transpose, transpose, and conjugate operations, respectively. The symbol $\mathrm{tr}(\cdot)$ is the trace of matrix. The symbol $\mathrm{diag}(\mathbf{A})$ denotes a vector whose entries are the diagonal elements of matrix $\mathbf{A}$, while $\mathrm{diag}(\mathbf{a})$ is a diagonal matrix whose diagonal elements are the elements in vector $\mathbf{a}$. The symbols $\otimes$ and $\odot$ are Kronecker and Hadamard product, respectively. The operator $\mathrm{vec}(\mathbf{A})$ stands for vectorization of matrix $\mathbf{A}$, while $\mathrm{unvec}(\mathbf{a})$ is an inverse operation of $\mathrm{vec}(\mathbf{A})$. The symbols $|\cdot|$ and $\|\cdot\|$ denote absolute value and norm operations, respectively. $\mathbb{C}^{M \times N}$ is the complex space of $M \times N$ dimensions. $\mathcal{CN}\left( 0, \sigma^2 \right)$ indicates that a random variable follows a complex Gaussian distribution with zero mean and variance equals  to $\sigma^2$. $\mathfrak{R}[x]$ and $\angle x$ represent the real part and the phase of the complex symbol $x$, respectively. The operation $\mathbb{E}[\cdot]$ takes the mean of a random variable.

\section{System model And Problem Formulation} \label{system model}
\subsection{System Model}
\begin{figure}[!t]
    \centering
    \includegraphics[scale=0.36]{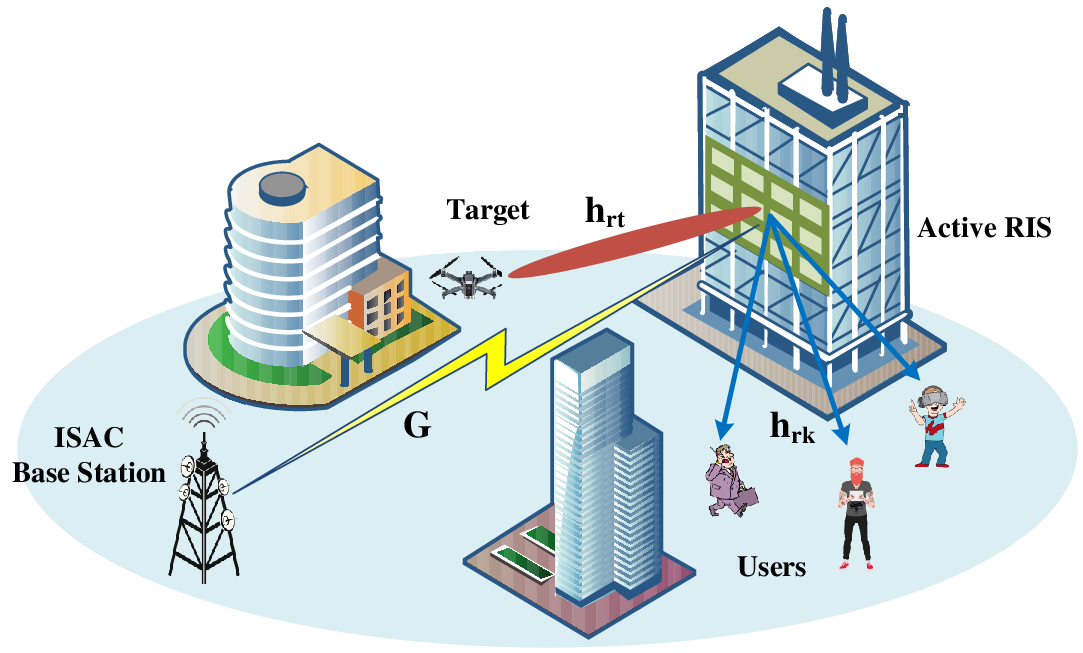}
    \caption{A scenario of an active RIS-aided target sensing and multi-user communication.}
    \label{fig:System model}
\end{figure}

Consider an active RIS-aided ISAC system as depicted in Fig. \ref{fig:System model}. The system comprises an $M$-antenna ISAC BS that serves $K$ single-antenna users and simultaneously detects a single target. {The BS is mono-static so that it is able to transmit integrated communication and sensing signals and receive the target's echo.} Without loss of generalization, the paths from the BS to the users and the target both are considered being  blocked{\footnote{{The proposed method remains applicable even when a direct link exists. The presence of a direct link does not affect the transceiver design, as they are based on a fixed compound channel. Additionally, when designing the active RIS's scattering matrix, the direct link introduces extra linear and quadratic terms in the objective function and constraints. These can be incorporated into the existing formulations. Therefore, the ultimate form of the optimization problem remains unchanged and can be solved using the proposed method.}}}. To establish virtual links for both communication and sensing, an active RIS, composed of $N$ reflecting elements, is deployed. Additionally, all the channels  are assumed to be known{\footnote{{The estimation of communication and sensing channels is beyond the scope of this work. In practice, before the actual transmission of communication data, users send pilots, and a two-phase strategy can be utilized for low-overhead communication channel estimation \cite{CEE}. Meanwhile, the structured sensing channel can be obtained by estimating the direction-of-arrival and radar cross-section of the target \cite{RISDOA}.}}}. 

\subsubsection{Transmit Signal}
The signal transmitted by the BS is a superposition of communication symbols $\mathbf{s_c} \in \mathbb{C}^{K}$ and radar waveform $\mathbf{s_t} \in \mathbb{C}^{M}$. To accommodate both communication and sensing functionalities within the transmit signal, the BS employs beamforming to focus the signal towards the intended users as well as the target. Mathematically, the transmit signal can be represented as
\begin{align}
    \mathbf{W s} &= \mathbf{W_t s_t} + \mathbf{W_c s_c} = \begin{bmatrix}
        \mathbf{W}_c & \mathbf{W}_t
    \end{bmatrix}
    \begin{bmatrix}
        \mathbf{s}_c \\ \mathbf{s}_t
    \end{bmatrix},
\end{align}
where $\mathbf{W} \in \mathbb{C}^{M \times (K+M)}$ denotes the joint beamforming matrix, encapsulating both communication beamformer $\mathbf{W_c} \in \mathbb{C}^{M \times K}$ and sensing beamformer $\mathbf{W_t} \in \mathbb{C}^{M \times M}$ as $\mathbf{W} =  \begin{bmatrix}
        \mathbf{W}_c & \mathbf{W}_t
    \end{bmatrix} \in \mathbb{C}^{M \times (M+K)}$. The integrated baseband waveform, denoted by $\mathbf{s} \in \mathbb{C}^{M+K}$, represents the combined communication and sensing signals transmitted by the BS. It is constructed by stacking the communication symbols $\mathbf{s}_c \in \mathbb{C}^K$ and the radar waveform $\mathbf{s}_t \in \mathbb{C}^M$ into a single vector {\footnote{{
The mixed signals are not separated at the active RIS because it lacks the capability to process baseband signals \cite{ARIS_ori1}. At the user receivers, the transmit beamformers and RIS beamformers are designed to suppress the sensing signal and multi-user interference, allowing for the detection of the desired symbols \cite{Joint_Transmit_Beamforming}. In the mono-static ISAC receiver, the communication symbols can either be subtracted from the mixed signal \cite{mixone} or regarded as part of the sensing waveform \cite{Joint_Transmit_Beamforming}, \cite{mixtwo}, since they are already known to the sensing receiver.}}}, $
\mathbf{s} = \begin{bmatrix}
    \mathbf{s}_c^T & \mathbf{s}_t^T
  \end{bmatrix}^T\in \mathbb{C}^{M+K}
$. {Without loss of generality, the communication symbols $\mathbf{s}_c$ and the sensing signal $\mathbf{s}_t$ ought to satisfy the following requirements \cite{Joint_Transmit_Beamforming}, \cite{Radarwaveformthree}. Both
$\mathbf{s}_c$ and $\mathbf{s}_t$ are zero-mean stationary stochastic process, i.e, $ \mathbb{E}[\mathbf{s}_c]=\mathbb{E}[\mathbf{s}_t]=\mathbf{0}^{M}$, while they are statistically independent and uncorrelated, namely $ \mathbb{E}[\mathbf{s}_t \mathbf{s}_c^H]=\mathbf{0}^{M \times K}$. In addition, the communication symbols in $\mathbf{s}_c$ are independent and normalized to unit power, mathematically expressed as $ \mathbb{E}[\mathbf{s}_c\mathbf{s}^H_c]=\mathbf{I}^{K \times M}$. Similarly, $\mathbf{s}_t$ has $M$ unit-power and spatially orthogonal individual waveforms, namely $ \mathbb{E}[\mathbf{s}_t \mathbf{s}^H_t]=\mathbf{I}^{M \times M}$. Typically, the communication symbols are the digital modulated symbols such as PSK and QAM \cite{Radarwaveformone}, while the sensing signals can be pseudo
random coding sequence \cite{Joint_Transmit_Beamforming}, \cite{Radarwaveformtwo}. } 

\subsubsection{Communication Model}
As illustrated in Fig.~\ref{fig:System model}, the BS transmits the integrated signals, which are then reflected by the active RIS and finally received by the users. We denote the channel from the BS to the active RIS as $\mathbf{G} \in \mathbb{C}^{N \times M}$, and the channel from the active RIS to user $k$ as $\mathbf{h}_{r,k}^H \in \mathbb{C}^{1 \times N}$. The active RIS beamformer, including both amplification and phase shift, are represented by
$\mathbf{A} = \mathrm{diag}([a_1, \dotsc, a_N])$ and $ \mathbf{\Theta} =\mathrm{diag}([\theta_1, \dotsc, \theta_N])$, where $a_n$ and $\theta_n$ denote the amplification and phase shift coefficients of the $n$-th element, respectively. For simplicity, we define $\mathbf{\Psi} = \mathbf{A} \mathbf{\Theta} = \mathrm{diag}([\psi_1, \dotsc, \psi_N])$. It is important to note that in practical scenarios, the amplifiers operate within the linear amplification zone, restricting the amplification coefficient of each element $a_n$ to a maximum value $a_{max}$, i.e., $a_n \leq a_{max}$. Consequently, the signal received by user $k$ can be expressed as
\begin{align} 
\label{comm_model}
y_k &= \mathbf{h}_{r,k}^H \mathbf{\Psi}^H \mathbf{G} \mathbf{W s} + \mathbf{h}_{r,k}^H \mathbf{\Psi}^H \mathbf{z} + n_k.
\end{align}
The dynamic noise introduced by the amplifier of active RIS is represented by the vector $\mathbf{z} \in \mathbb{C}^N$, while $n_k$ denotes the noise of the $k$-th user receiver. The noise sources are modeled as additive white Gaussian noise (AWGN) with zero mean. The noise variances of the users' receivers and active RIS are denoted by $\sigma_k^2$ and $\sigma_0^2$, respectively.

Following the received signal model presented in (\ref{comm_model}), the SINR of user $k$ can be expressed as
\begin{align}
    \gamma_k = \frac{|\mathbf{h}_k^H \mathbf{w}_k|^2}{\sum_{i \neq k}^{K+M}|\mathbf{h}_k^H \mathbf{w}_i|^2 + \|\mathbf{h}_{r,k}^H \mathbf{\Psi}^H\|^2 \sigma_0^2 + \sigma_k^2}, 
\end{align}
where $\mathbf{h}_k^H=\mathbf{h}_{r,k}^H \mathbf{\Psi}^H \mathbf{G}$ represents the cascaded communication channel for user $k$. Subsequently, the sum-rate can be determined as follows:
\begin{align}
    R_b = \sum_{k=1}^{K}\log_2 (1+\gamma_k).
\end{align}

\subsubsection{Radar Model}
{
For target sensing, the transmit signals first propagate through the active RIS and illuminate the target. The signal at the target can be expressed similarly to (\ref{comm_model}) as 
\begin{align}
    \mathbf{e}_1 = \mathbf{h}_{rt}^H \mathbf{\Psi}^H \mathbf{G} \mathbf{W s} + \mathbf{h}_{rt}^H \mathbf{\Psi}^H \mathbf{z},
\end{align}
 where $\mathbf{h}^H_{rt} \in \mathbb{C}^{1\times N}$ represents the channel from the active RIS to the target. The echo is then reflected from the target and processed by the active RIS again, resulting in
 \begin{align}
    \mathbf{e}_2 = \mathbf{\Psi}\mathbf{h}_{rt}\mathbf{e}_1+ \mathbf{\Psi}\mathbf{z}_1.
 \end{align}
The vector $\mathbf{z}_1 \!\!\!\in\!\! \!\mathbb{C}^{N}$ denotes the AWGN at the output of the active RIS during backward propagation, following a distribution of $\mathcal{CN}(0, \sigma_0^2)$. Subsequently, the signal is received by the BS after passing through the channel $\!\mathbf{G}^H\!$, expressed as{\footnote{{The target can not only be the drones in the sky, but also the vehicles on the ground. For the target in the sky, the model and the algorithm are exactly shown in the paper. For the target on the ground, the sensing model should also consider the existence clutter. In this case, the algorithm can be easily extended, where the newly-added term can be integrated with existing terms. Similar extension can be found in \cite{Skyone}, \cite{skytwo}.}} }
\begin{align} \label{radar model}
    \mathbf{y}_t & = \mathbf{G}^H\mathbf{e}_2 + \mathbf{n}_r  =\mathbf{H}_t \mathbf{Ws}+ \mathbf{H}_{z0} \mathbf{z}+\mathbf{H}_{z1} \mathbf{z}_1 +\mathbf{n}_r,
\end{align}
where 
\begin{subequations}
    \begin{align}
    &\mathbf{H}_t = \mathbf{G}^H \mathbf{\Psi}\mathbf{h}_{rt}\mathbf{h}_{rt}^H \mathbf{\Psi}^H \mathbf{G}, \\
    &\mathbf{H}_{z0} = \mathbf{G}^H \mathbf{\Psi}\mathbf{h}_{rt} \mathbf{h}_{rt}^H \mathbf{\Psi}^H , \\
    &\mathbf{H}_{z1} = \mathbf{G}^H \mathbf{\Psi}.
\end{align}
\end{subequations}
In the receiving model (\ref{radar model}), the term $\mathbf{H}_t \mathbf{Ws}$ represents the echo from the target, where $\mathbf{H}_t \in \mathbb{C}^{M \times M}$ is the channel matrix between the target and the receiver. The term $\mathbf{H}_{z0} \mathbf{z}$ accounts for the amplifier-induced dynamic noise during forward propagation through the active RIS, while the term $\mathbf{H}_{z1} \mathbf{z}_1$ accounts for the amplifier-induced dynamic noise during backward propagation through the active RIS. The vector $\mathbf{n}_r$ represents the noise of the echo receiver, which is assumed to be AWGN with zero mean and variance $\sigma_r^2$.}

To enhance the echo SNR, a receiving beamformer $\mathbf{u}^H \in \mathbb{C}^{1 \times M}$ is applied to the echo, which is given by
\begin{align}
    \mathbf{u}^H \mathbf{y}_t &= \mathbf{u}^H \mathbf{H}_t \mathbf{Ws} + \mathbf{u}^H \mathbf{H}_{z0} \mathbf{z}+\mathbf{u}^H \mathbf{H}_{z1} \mathbf{z}_1+\mathbf{u}^H\mathbf{n}_r. 
\end{align}
As a result, the echo SNR for target sensing is given as
\begin{align}
    \gamma_r = \frac{\mathbf{u}^H \mathbf{H}_t \mathbf{W W}^H \mathbf{H}_t^H \mathbf{u}}{\mathbf{u}^H(\sigma_0^2 \mathbf{H}_{z0} \mathbf{H}_{z0}^H+\sigma_0^2 \mathbf{H}_{z1} \mathbf{H}_{z1}^H+ \sigma_r^2 \mathbf{I}_M) \mathbf{u}}.
\end{align}
Since the echo SNR is directly related to various radar performance metrics, such as detection probability and parameter estimation accuracy, it can be used as a key performance indicator (KPI) for target sensing \cite{ISAC_concept}. {It is worth mentioning that we adopt a widely accepted assumption that the sensing channel can be acquired through previous estimation stages \cite{Radarwaveformone,whysensix} or from an information map database \cite{RIS_ISAC,whysenthree}. In this context, sensing primarily serves to fine-tune coarse estimated parameters \cite{whysensix}, track and update target conditions \cite{whysentwo}, and establish the groundwork for higher-level sensing functions \cite{whysenfour}.} 

\subsubsection{Power Consumption}
The system power consumption consists of two primary components: the power consumed at the BS and the power consumed at the active RIS. At the BS, the power consumption encompasses the transmit power $\|\mathbf{W}\|_F^2$ and the static hardware power $P_{ST}$, as represented by
\begin{equation}
P_{ISAC} = \xi \|\mathbf{W}\|_F^2 + P_{ST},
\end{equation}
where the symbol $\xi$ denotes the inverse of the energy conversion coefficient at the ISAC transmitter. The power consumption at the active RIS, on the other hand, comprises the power utilized for signal amplification and the hardware power required for controlling the amplifier and phase tuner at the active RIS. Therefore, the power consumed at the active RIS can be expressed as
\begin{align}
    P_{RIS}&= N \left(P_{RP} + P_{RA}\right)+ \zeta \left(\|\mathbf{\Psi}^H \mathbf{G}\mathbf{W}\|_F^2 + 2\sigma_0^2\|\mathbf{\Psi}\|_F^2\right.  \notag \\
    &+ \|\mathbf{\Psi} \mathbf{h}_{rt} \mathbf{h}_{rt}^H \mathbf{\Psi}^H \mathbf{G} \mathbf{W} \|_F^2+ \sigma_0^2 \| \mathbf{\Psi}\mathbf{h}_{rt} \mathbf{h}_{rt}^H \mathbf{\Psi}^H \|_F^2 ),
\end{align}
where $\zeta$ represents the inverse of the energy conversion coefficient at the active RIS, while $P_{RP}$ and $P_{RA}$ correspond to the hardware power of the phase tuner and the power amplifier at the active RIS, respectively. The total power consumption of the system is given as
\begin{equation}
P = P_{ISAC}+ P_{RIS}.
\end{equation}
Following the definition in \cite{PassiveRISDFRCTGCN}, the system EE is given as
\begin{equation} \label{EE_concept}
\eta = \frac{R_b}{P},
\end{equation}
which represents the amount of transmitted data per Joule.

\subsection{Problem Formulation}
In this paper, we aim to optimize the system EE by jointly designing the transmit beamforming matrix $\mathbf{W}$, the amplification matrix $\mathbf{A}$, the phase shift matrix $\mathbf{\Theta}$ of the active RIS, and the echo receiving beamformer $\mathbf{u}$. This optimization problem can be formulated as
\begin{subequations}\label{formulation}
\begin{align} 
\max _{\mathbf{W},\mathbf{\Theta},\mathbf{A},\mathbf{u}} &~  \eta = \frac{R_b}{P}, \tag{\ref{formulation}{a}} \\ 
s.t.&~  P_{ISAC} \leq P_{ISAC}^{max}, \tag{\ref{formulation}{b}} \\
& P_{RIS} \leq P_{RIS}^{max},  \tag{\ref{formulation}{c}} \\
& \gamma_k \geq \tau_k, \forall k = 1,\dots,K , \tag{\ref{formulation}{d}} \\
& \gamma_r \geq \tau_r,  \tag{\ref{formulation}{e}}\\
& a_n \leq a_{max}, \forall n=1,\dots,N, \tag{\ref{formulation}{f}}
\end{align}
\end{subequations}
where constraints (\ref{formulation}{b}) and (\ref{formulation}{c}) represent the power budgets of the BS and the active RIS, respectively, with $P_{ISAC}^{max}$ and $P_{RIS}^{max}$ denoting the maximum permissible power consumption at the BS and the active RIS. Constraint (\ref{formulation}{d}) ensures the communication QoS for each user, requiring that the SINR of each user remains above the predefined SINR threshold $\tau_k$. Constraint (\ref{formulation}{e}) stipulates that the echo SNR must exceed the threshold $\tau_r$ to guarantee satisfactory sensing performance. Finally, constraint (\ref{formulation}{f}) imposes a maximum amplification limit of $a_{max}$ for each active RIS element. It is challenging to solve problem (\ref{formulation}) due to the non-convexity of the objective function and the constraints (\ref{formulation}{b}), (\ref{formulation}{c}), (\ref{formulation}{d}), and (\ref{formulation}{e}). 

\section{Proposed Alternative Optimization Algorithm} \label{algorithm}

In this section, an alternating optimization-based approach is proposed to tackle the challenging problem (\ref{formulation}). Initially, a pre-processing step is applied to transform the objective function into a more tractable form. Subsequently, the problem is decomposed into three subproblems and optimized iteratively based on the generalized Rayleigh quotient optimization, SDR, and MM frameworks. 

\subsection{Pre-processing} \label{pre-processing}
Prior to solving the optimization process, a pre-processing step should be applied to the objective function due to its intractable fractional form. To decouple the numerator and denominator, we employ the classic Dinkelbach algorithm \cite{Dinkelbach}. Specifically, by introducing an auxiliary variable $\mathbf{\mu}$, the original problem (\ref{formulation}) can be equivalently reformulated as
\begin{subequations}\label{formulation2}
\begin{align} 
\max _{\mu,\mathbf{W},\mathbf{\Psi},\mathbf{u}} &~ R_b - \mu P, \tag{\ref{formulation2}{a}} \\ 
s.t.&~  (\ref{formulation}{b}), (\ref{formulation}{c}), (\ref{formulation}{d}), (\ref{formulation}{e}), (\ref{formulation}{f}) , \tag{\ref{formulation2}{b}}
\end{align}
\end{subequations}
where the optimal value of $\mu$ is $\mu^\star=\frac{R_b}{P}$. However, when $\mu$ is fixed, $R_b$ remains in a complicated form due to the logarithm and the fractional form. To separate the SNR from the logarithm form and decouple the fractional form of SNR, we employ the Lagrangian dual transform \cite{WJZ1} \cite{WJZ2} and quadratic transform techniques \cite{Lagrange_QT} to transform the objective function into a more tractable form, which is given by
\begin{align}
f(\mathbf{W},\mathbf{\Psi},\pmb{\rho},\pmb{\nu}) =g(\mathbf{W},\mathbf{\Psi},\pmb{\rho},\pmb{\nu}) -\mu(P_{ISAC}+ P_{RIS}),
\end{align}
where
\begin{align}    &\!\!\!g(\mathbf{W},\mathbf{\Psi},\pmb{\rho},\pmb{\nu})\!=\!\sum_{k=1}^{K} \!\left[\ln (1+\rho_k)\!-\!\rho_k
\!+\! 2\sqrt{1\!+\!\rho_k} \mathfrak{R}\left\lbrace \nu_k^*  \mathbf{h}_k^H \mathbf{w}_k \right\rbrace  \right. \notag\\
   & \left. - |\nu_k|^2 \left( \sum_{i=1}^{K+M}|\mathbf{h}_k^H \mathbf{w}_i|^2 + \|\mathbf{h}_{r,k}^H \mathbf{\Psi}^H\|^2 \sigma_0^2 + \sigma_k^2 \right) \right]
\end{align}
is the equivalent form of $R_b$ after the transformation of Lagrangian dual transform and quadratic transform. Specifically, the auxiliary variables $\pmb{\rho}$ and $\pmb{\nu}$ are introduced to decouple the logarithm and the fractional form, respectively. Given $\mathbf{W}$ and $\mathbf{\Psi}$, the problem becomes an unconstrained optimization problem with respect to $\pmb{\rho}$ and $\pmb{\nu}$. Consequently, the optimal solutions $\pmb{\rho}^\star$ and $\pmb{\nu}^\star$ can be obtained by solving $\frac{\partial f(\pmb{\rho})}{\partial\pmb{\rho}}=0$ and $\frac{\partial f(\pmb{\nu})}{\partial\pmb{\nu}}=0$, given as
\begin{align} 
    \rho^\star_k &= \frac{\iota_k}{2} (\iota_k + \sqrt{\iota_k^2+4}), \label{auxi1} \\
    \nu^\star_k &= \frac{\sqrt{1+\rho_k}\mathbf{h}_k^H \mathbf{w}_k}{\sum_{i=1}^{K+M}|\mathbf{h}_k^H \mathbf{w}_i|^2 + \|\mathbf{h}_{r,k}^H \mathbf{\Psi}^H\|^2 \sigma_0^2 + \sigma_k^2}, \label{auxi2} 
\end{align}
where we denote $\iota_k= \mathfrak{R}\left\lbrace \nu_k^*  \mathbf{h}_k^H \mathbf{w}_k \right\rbrace$. With optimized auxiliary variables, the problem (\ref{formulation2}) can then be simplified by omitting the constants, expressed as
\begin{subequations}\label{formulation3}
\begin{align} 
\max _{\mathbf{W},\mathbf{\Psi},\mathbf{u}} &~ f_1(\mathbf{W},\mathbf{\Psi}), \tag{\ref{formulation3}{a}} \\ 
s.t.&~  (\ref{formulation}{b}), (\ref{formulation}{c}), (\ref{formulation}{d}), (\ref{formulation}{e}), (\ref{formulation}{f}) ,\tag{\ref{formulation3}{b}}
\end{align}
\end{subequations}
where 
\begin{align} 
&f_1(\mathbf{W},\mathbf{\Psi})=\sum_{k=1}^{K} \left[ 2\sqrt{1+\rho_k} \mathfrak{R}\left\lbrace \nu_k^*  \mathbf{h}_k^H \mathbf{w}_k\right\rbrace- |\nu_k|^2 \right. \notag\\
   & \left. \left( \!\sum_{i=1}^{K+M}\!\!|\mathbf{h}_k^H \mathbf{w}_i|^2 \!+\! \|\mathbf{h}_{r,k}^H \mathbf{\Psi}^H\|^2 \sigma_0^2 \right) \!\right]\!\!-\!\!\mu(P_{ISAC}\!\!+\! \!P_{RIS}).
\end{align}

\subsection{Echo Beamforming  Design} \label{echo beamfomer}
Considering the design of receiving beamforming $\mathbf{u}$, it is evident that, in the problem (\ref{formulation3}), only the constraint (\ref{formulation}{e}) involves $\mathbf{u}$, which is related to the echo SNR. Therefore, the design of $\mathbf{u}$ can be directly transformed to the maximization of the target echo SNR $ \gamma_r$, expressed as
\begin{align} \label{opt_u}
\max _{\mathbf{u}} &~  \frac{\mathbf{u}^H \mathbf{C} \mathbf{u}}{\mathbf{u}^H\mathbf{D} \mathbf{u}},
\end{align}
where we denote $\mathbf{C}\!=\!\mathbf{H}_t \mathbf{W W}^H \mathbf{H}_t^H \!\in\! \mathbb{C}^{M \times M} $ and $\mathbf{D}\!=\!(\sigma_0^2 \mathbf{H}_{z0} \mathbf{H}_{z0}^H\!+\!\sigma_0^2 \mathbf{H}_{z1} \mathbf{H}_{z1}^H\!+ \!\sigma_r^2 \mathbf{I}_M ) \!\in\!\mathbb{C}^{M \times M}$. It is obvious that $\mathbf{C}$ and $\mathbf{D}$ are Hermitian matrices, with $\mathbf{D}$ being a positive semidefinite matrix. Consequently, the problem (\ref{opt_u}) falls under the category of generalized Rayleigh quotient optimization problems, which is a well-established class of optimization problems {\cite{Matrix}}. The optimal solution to this problem is the eigenvector of $\mathbf{D}^{-1} \mathbf{C}$ associated with the largest eigenvalue.

\subsection{Transmit Beamforming  Design} \label{transmit beamformer}
In this subsection, we address the optimization of transmit beamforming $\mathbf{W}$. With the remaining variables held constant, the optimization problem (\ref{formulation3}) can be simplified as 
\begin{subequations}\label{opt_w}
\begin{align} 
\max _{\mathbf{W}} &~  \sum_{k=1}^{K} \!\left[
2\sqrt{1\!+\!\rho_k} \mathfrak{R}\left\lbrace \nu_k^*  \mathbf{h}_k^H \mathbf{w}_k \right\rbrace\! - \!|\nu_k|^2\!\sum_{i=1}^{K+M}\!|\mathbf{h}_k^H \mathbf{w}_i|^2  \right] \notag\\
& -\mu \left[\xi \|\mathbf{W}\|_F^2+ \zeta \left(\|\mathbf{H}_{z1}^H \mathbf{W}\|_F^2  + \|\mathbf{H}_{z0}^H \mathbf{W} \|_F^2 \right)\right],\tag{\ref{opt_w}{a}} \\ 
s.t.&~  \|\mathbf{W}\|_F^2 \leq \widetilde{P}_{ISAC}, \tag{\ref{opt_w}{b}} \\
\vspace{10pt}
&~\|\mathbf{H}_{z1}^H \mathbf{W}\|_F^2  + \|\mathbf{H}_{z0}^H \mathbf{W} \|_F^2  
    \leq \widetilde{P}_{RIS},   \tag{\ref{opt_w}{c}} \\
& \frac{|\mathbf{h}_k^H \mathbf{w}_k|^2}{\sum_{i \neq k}^{K+M}|\mathbf{h}_k^H \mathbf{w}_i|^2 + \|\mathbf{h}_{r,k}^H \mathbf{\Psi}^H\|^2 \sigma_0^2 + \sigma_k^2} \geq \tau_k,   \tag{\ref{opt_w}{d}} \\
& \frac{\mathbf{u}^H \mathbf{H}_t \mathbf{W W}^H \mathbf{H}_t^H \mathbf{u}}{\mathbf{u}^H(\sigma_0^2 \mathbf{H}_{z0} \mathbf{H}_{z0}^H+\sigma_0^2 \mathbf{H}_{z1} \mathbf{H}_{z1}^H+ \sigma_r^2 \mathbf{I}_M) \mathbf{u}}\geq \gamma_r. \tag{\ref{opt_w}{e}}
\end{align}
\end{subequations}
Here, all terms unrelated to $\mathbf{W}$ in constraints (\ref{opt_w}{b}) and (\ref{opt_w}{c}) are incorporated into the definitions of $\widetilde{P}_{ISAC}$ and $\widetilde{P}_{RIS}$, where $\widetilde{P}_{ISAC}$ is defined as $
\widetilde{P}_{ISAC} = \frac{1}{\xi} (P_{ISAC}^{max} - P_{ST})$, while  $\widetilde{P}_{RIS}$ is defined as $
\widetilde{P}_{RIS} = \frac{1}{\zeta} (P_{RIS}^{max} - N(P_{RP} + P_{RA})) - (\sigma_0^2 \|\mathbf{\Psi} \mathbf{h}_{rt} \mathbf{h}_{rt}^H \mathbf{\Psi}^H \|_F^2 + 2 \sigma_0^2\|\mathbf{\Psi} \|_F^2)$.

Upon examining problem (\ref{opt_w}), it is evident that the optimized variable $\mathbf{W}$ coexists with its individual components $\mathbf{w}_i$. This poses a challenge in the optimization process. To unify the representation, we consider transforming $\mathbf{W}$ into $\mathbf{w}_i$ using the relationship $\mathbf{W} = [\mathbf{w}_1,\mathbf{w}_2,\cdots,\mathbf{w}_{K+M}]$. Firstly, it is clear that constraint (\ref{opt_w}{b}) can be rewritten as:
\begin{align}
    \sum^{K+M}_{i=1}\mathbf{w}_i^H \mathbf{w}_i \leq \widetilde{P}_{ISAC},
\end{align}
which is a convex constraint. For the constraint (\ref{opt_w}{c}), we denote $\mathbf{H}_{z}\! = \!\mathbf{H}_{z1} \mathbf{H}_{z1}^H \!+\!\mathbf{H}_{z0} \mathbf{H}_{z0}^H$, so that it can be transformed into
\begin{align} \label{w_const2}
    &\sum^{K+M}_{i=1} \mathbf{w}_i^H \mathbf{H}_{z}\mathbf{w}_i
    \leq \widetilde{P}_{RIS}.
\end{align}
To address constraint (\ref{opt_w}{d}), we expand the absolute terms and rearrange the terms, resulting in the following reformulation:
\begin{align}
    \mathbf{w}_k^H \mathbf{H}_k \mathbf{w}_k \geq \frac{\tau_k}{1+\tau_k} (\sum_{i =1}^{K+M}\mathbf{w}_i^H \mathbf{H}_k \mathbf{w}_i+ c_0),
\end{align}
where we denote $\mathbf{H}_k=\mathbf{h}_k\mathbf{h}_k^H$ and the constant unrelated to $\mathbf{W}$ is written as $c_0 = \|\mathbf{h}_{r,k}^H \mathbf{\Psi}^H\|^2 \sigma_0^2 + \sigma_k^2$. The transformation of the last constraint (\ref{opt_w}{e}) follows a similar approach as that of the previous constraint, given by
\begin{align}
    \sum_{i =1}^{K+M}\mathbf{w}_i^H \mathbf{H}_u\mathbf{w}_i \geq \widetilde{\gamma}_r,
\end{align}
where $\widetilde{\gamma}_r=\gamma_r\mathbf{u}^H(\sigma_0^2 \mathbf{H}_{z0} \mathbf{H}_{z0}^H+\sigma_0^2 \mathbf{H}_{z1} \mathbf{H}_{z1}^H+ \sigma_r^2 \mathbf{I}_M) \mathbf{u}$ is a constant unrelated to $\mathbf{W}$, while we denote $\mathbf{H}_u=\mathbf{H}_t^H \mathbf{u}\mathbf{u}^H \mathbf{H}_t$. Regarding the objective function (\ref{opt_w}{a}), the sum rate-related terms can be reformulated as
\begin{align}
  \!\!\sum_{k=1}^{K} \left[2 \mathfrak{R}\left\lbrace\widetilde{\mathbf{h}}_k^H \mathbf{w}_k \right\rbrace \!\!- \!\!\mathbf{w}_k^H\mathbf{H}_{\nu, k} \mathbf{w}_k  \!\!-\!\!\sum_{i\neq k}^{K+M}\mathbf{w}_i^H\mathbf{H}_{\nu, k}\mathbf{w}_i  \right],
\end{align}
where $\widetilde{\mathbf{h}}_k=\sqrt{1+\rho_k}\nu_k \mathbf{h}_k$ and $\mathbf{H}_{\nu, k}=|\nu_k|^2\mathbf{H}_k$. The terms related to the power consumption are given as
\begin{align}
  -\mu \xi \sum^{K+M}_{i=1}\mathbf{w}_i^H \mathbf{w}_i-\mu \zeta \sum^{K+M}_{i=1} \mathbf{w}_i^H \mathbf{H}_{z}\mathbf{w}_i.
\end{align}
Consequently, the problem (\ref{opt_w}) has been transformed into an optimization problem with respect to $\mathbf{w}_i$, expressed as
\begin{subequations}\label{opt_W_2}
\begin{align} 
\max _{\mathbf{w}_i} &~  \sum_{k=1}^{K} \!\left[2 \mathfrak{R}\left\lbrace\widetilde{\mathbf{h}}_k^H \mathbf{w}_k \right\rbrace \!-\! \mathbf{w}_k^H\mathbf{H}_{\nu, k} \mathbf{w}_k  \!-\!\!\sum_{i\neq k}^{K+M}\!\mathbf{w}_i^H\mathbf{H}_{\nu, k}\mathbf{w}_i  \right] \notag\\
&-\mu \xi \sum^{K+M}_{i=1}\mathbf{w}_i^H \mathbf{w}_i-\mu \zeta \sum^{K+M}_{i=1} \mathbf{w}_i^H \mathbf{H}_{z}\mathbf{w}_i,  \tag{\ref{opt_W_2}{a}} \\ 
s.t.&~  \sum^{K+M}_{i=1}\mathbf{w}_i^H \mathbf{w}_i \leq \widetilde{P}_{ISAC}, \tag{\ref{opt_W_2}{b}} \\
& \sum^{K+M}_{i=1} \mathbf{w}_i^H \mathbf{H}_{z}\mathbf{w}_i
    \leq \widetilde{P}_{RIS},   \tag{\ref{opt_W_2}{c}} \\
&  \mathbf{w}_k^H \mathbf{H}_k \mathbf{w}_k \geq \frac{\tau_k}{1+\tau_k} (\sum_{i =1}^{K+M}\mathbf{w}_i^H \mathbf{H}_k \mathbf{w}_i+ c_0)\tag{\ref{opt_W_2}{d}}, \\
& \sum_{i =1}^{K+M}\mathbf{w}_i^H \mathbf{H}_u\mathbf{w}_i \geq \widetilde{\gamma}_r.  \tag{\ref{opt_W_2}{e}}
\end{align}
\end{subequations}
The optimization problem (\ref{opt_W_2}) remains intractable due to the non-convexity of constraints (\ref{opt_W_2}{d}) and (\ref{opt_W_2}{e}). To address this challenge, we transform the quadratic problem into an equivalent semidefinite programming (SDP) problem. Prior to this transformation,
$\widetilde{\mathbf{w}}=\begin{bmatrix}
    \mathbf{w}^T &
    1
\end{bmatrix}^T$ is introduced to incorporate the linear term and quadratic term in the objective function, which gives
\begin{align}
   & \!\!\!\!\sum_{k=1}^{K}\!\! \left[ \widetilde{\mathbf{w}}_k^H  \widetilde{\mathbf{H}}_{\nu,k,1} \widetilde{\mathbf{w}}_k \!+\!\!\sum_{i\neq k}^{K+M}\!\!\!\!\widetilde{\mathbf{w}}_i^H \widetilde{\mathbf{H}}_{\nu,k,2}  \widetilde{\mathbf{w}}_i  \right] \!\! + \!\!\!\sum^{K+M}_{i=1}\!\!\!\!\widetilde{\mathbf{w}}_i^H \widetilde{\mathbf{H}}_{zz}\widetilde{\mathbf{w}}_i , 
\end{align}
where we denote $\widetilde{\mathbf{H}}_{\nu,k,1}\!\! = \!\! \begin{bmatrix}
       -\mathbf{H}_{\nu, k} &\widetilde{\mathbf{h}}_k \\
       \widetilde{\mathbf{h}}_k^H & 0
   \end{bmatrix}$, $\widetilde{\mathbf{H}}_{\nu,k,2} = \begin{bmatrix}
       -\mathbf{H}_{\nu, k} &\mathbf{0} \\
       \mathbf{0}^T & 0
   \end{bmatrix}$ and $\widetilde{\mathbf{H}}_{zz} = \begin{bmatrix}
       -\mu \xi \mathbf{I}- \mu \zeta \mathbf{H}_{z}&\mathbf{0} \\
       \mathbf{0}^T & 0
   \end{bmatrix}$.

Accordingly, the constraints can be re-transformed as
\begin{subequations}
    \begin{align}
    & \sum^{K+M}_{i=1}\widetilde{\mathbf{w}}_i^H \widetilde{\mathbf{w}}_i \leq \widetilde{P}_{ISAC}+K+M, \\
    & \sum^{K+M}_{i=1} \widetilde{\mathbf{w}}_i^H \widetilde{\mathbf{H}}_z\widetilde{\mathbf{w}}_i
    \leq \widetilde{P}_{RIS},\\
    & \widetilde{\mathbf{w}}_k^H \widetilde{\mathbf{H}}_k \widetilde{\mathbf{w}}_k \geq \frac{\tau_k}{1+\tau_k} (\sum_{i =1}^{K+M}\widetilde{\mathbf{w}}_i^H \widetilde{\mathbf{H}}_k\widetilde{\mathbf{w}}_i+ c_0), \\   &\sum_{i=1}^{K+M}\widetilde{\mathbf{w}}_i^H \widetilde{\mathbf{H}}_u \widetilde{\mathbf{w}}_i \geq \widetilde{\gamma}_r,
\end{align}
\end{subequations}
where $\widetilde{\mathbf{H}}_z = \begin{bmatrix}
        \mathbf{H}_{z}&\mathbf{0} \\
       \mathbf{0}^T & 0
    \end{bmatrix}$, $\widetilde{\mathbf{H}}_k = \begin{bmatrix}
        \mathbf{H}_k &\mathbf{0} \\
       \mathbf{0}^T & 0
    \end{bmatrix}$, $\widetilde{\mathbf{H}}_u = \begin{bmatrix}
        \mathbf{H}_u &\mathbf{0} \\
       \mathbf{0}^T & 0
    \end{bmatrix}$. Introducing the auxiliary variable $\widetilde{\mathbf{W}}_k=\widetilde{\mathbf{w}}_k\widetilde{\mathbf{w}}_k^H$ and utilizing the properties of the trace operator, the optimization problem (\ref{opt_W_2}) can be transformed into
\begin{subequations}\label{opt_W_5}
\begin{align} 
\max _{\widetilde{\mathbf{W}}_i} &~  \!\!\sum_{k=1}^{K} \!\!\left[ \mathrm{tr}(\widetilde{\mathbf{H}}_{\nu,k,1}\widetilde{\mathbf{W}}_k) \!\!+\!\!\!\!\sum_{i\neq k}^{K+M}\!\!\mathrm{tr}(\widetilde{\mathbf{H}}_{\nu,k,2}\widetilde{\mathbf{W}}_i) \!\right] \!\!\!+\!\!\!\!\sum^{K+M}_{i=1}\!\!\!\mathrm{tr}(\widetilde{\mathbf{H}}_{zz}\widetilde{\mathbf{W}}_i)\tag{\ref{opt_W_5}{a}}\\
s.t.&~  \sum^{K+M}_{i=1}\mathrm{tr}(\widetilde{\mathbf{W}}_i) \leq \widetilde{P}_{ISAC}+K+M, \tag{\ref{opt_W_5}{b}} \\
& \sum^{K+M}_{i=1} \mathrm{tr}(\widetilde{\mathbf{H}}_z \widetilde{\mathbf{W}}_i)
    \leq \widetilde{P}_{RIS},   \tag{\ref{opt_W_5}{c}} \\
&  \mathrm{tr}(\widetilde{\mathbf{H}}_k\widetilde{\mathbf{W}}_k) \geq \frac{\tau_k}{1+\tau_k} (\sum_{i =1}^{K+M}\mathrm{tr}(\widetilde{\mathbf{H}}_k\widetilde{\mathbf{W}}_i)+ c_0), \tag{\ref{opt_W_5}{d}} \\
& \sum_{i=1}^{K+M}\mathrm{tr}(\widetilde{\mathbf{H}}_u\widetilde{\mathbf{W}}_i) \geq \widetilde{\gamma}_r,  \tag{\ref{opt_W_5}{e}} \\
    & \widetilde{\mathbf{W}}_i \succeq 0, \mathrm{rank}(\widetilde{\mathbf{W}}_i)=1. \tag{\ref{opt_W_5}{f}}
\end{align}
\end{subequations}
Due to the rank-one constraint, the optimization problem (\ref{opt_W_5}) remains non-convex. To tackle this challenge, we employ the effective method of SDR. Specifically, by removing the rank-one constraint, the problem is transformed into a standard SDP problem, which can be efficiently solved using existing toolkits such as CVX. However, the resulting $\widetilde{\mathbf{W}}^\star$ may not adhere to the rank-one constraint. To address this, we apply eigenvalue decomposition and maximum eigenvalue approximation to obtain $\mathbf{\widetilde{w}}_i$. Finally, the optimal ISAC transmitter beamformer $\mathbf{w}_i^\star$ can be obtained as
\begin{equation} \label{w_recover}
    \mathbf{w}_i^\star=\frac{1}{t_i} \mathbf{\hat{w}}_i,
\end{equation}
where $t_i = \left[ \mathbf{\widetilde{w}}_i \right]_{M+1}$, $\mathbf{\hat{w}}_i = \left[ \mathbf{\Tilde{w}}_i \right]_{1:M}$. 

\subsection{Active RIS Beamforming $\mathbf{\Psi}$ Design} \label{RIS beamformer}
In this subsection, we focus on optimizing the active RIS beamformer $\mathbf{\Psi}$. With $\mathbf{u}$ and $\mathbf{W}$ held constant, the optimization problem (\ref{formulation3}) can be simplified as
\begin{subequations}\label{opt_psi}
\begin{align} 
\max _{\mathbf{\Psi}} &~ f_2(\mathbf{\Psi}), \tag{\ref{opt_psi}{a}} \\ 
s.t.&~   \|\mathbf{\Psi}^H \mathbf{G}\mathbf{W}\|_F^2 + \|\mathbf{\Psi} \mathbf{h}_{rt} \mathbf{h}_{rt}^H \mathbf{\Psi}^H \mathbf{G} \mathbf{W} \|_F^2 \notag\\
&+ \sigma_0^2 \| \mathbf{\Psi}\mathbf{h}_{rt}\mathbf{h}_{rt}^H \mathbf{\Psi }^H \|_F^2 
  + 2\sigma_0^2\|\mathbf{\Psi}\|_F^2\leq \overline{P}_{RIS} , \tag{\ref{opt_psi}{b}} \\
& \frac{|\mathbf{h}_k^H \mathbf{w}_k|^2}{\sum_{i \neq k}^{K+M}|\mathbf{h}_k^H \mathbf{w}_i|^2 + \|\mathbf{h}_{r,k}^H \mathbf{\Psi}^H\|^2 \sigma_0^2 + \sigma_k^2} \geq \tau_k,  \tag{\ref{opt_psi}{c}} \\
& \frac{\mathbf{u}^H \mathbf{H}_t \mathbf{W W}^H \mathbf{H}_t^H \mathbf{u}}{\mathbf{u}^H(\sigma_0^2 \mathbf{H}_{z0} \mathbf{H}_{z0}^H+\sigma_0^2 \mathbf{H}_{z1} \mathbf{H}_{z1}^H+ \sigma_r^2 \mathbf{I}_M) \mathbf{u}}\geq \tau_r,  \tag{\ref{opt_psi}{d}}\\
& a_n \leq a_{max}, \forall n=1,\dots,N \tag{\ref{opt_psi}{e}},
\end{align}
\end{subequations}
where $\overline{P}_{RIS} = \frac{1}{\zeta}\left(P_{RIS}^{max} - N \left(P_{RP} + P_{RA}\right)\right)$ and the objective function $f_2(\mathbf{\Psi})$ is given in (\ref{f2}) at the top of next page.

\begin{figure*}[!t] 
\begin{align} \label{f2}
    f_2(\mathbf{\Psi}) = &\sum_{k=1}^{K} \left[
2\sqrt{1+\rho_k} \mathfrak{R}\left\lbrace \nu_k^*  \mathbf{h}_{r,k}^H \mathbf{\Psi}^H \mathbf{G} \mathbf{w}_k \right\rbrace - |\nu_k|^2 \left( \sum_{i=1}^{K+M}|\mathbf{h}_{r,k}^H \mathbf{\Psi}^H \mathbf{G} \mathbf{w}_i|^2 + \|\mathbf{h}_{r,k}^H \mathbf{\Psi}^H\|^2 \sigma_0^2 \right) \right] -\mu \zeta \left(\|\mathbf{\Psi}^H \mathbf{G}\mathbf{W}\|_F^2  \right. \notag \\
& \left. + 2\sigma_0^2\|\mathbf{\Psi}\|_F^2+ \sigma_0^2 \| \mathbf{\Psi}\mathbf{h}_{rt} \mathbf{h}_{rt}^H \mathbf{\Psi}^H \|_F^2+\|\mathbf{\Psi} \mathbf{h}_{rt} \mathbf{h}_{rt}^H \mathbf{\Psi}^H \mathbf{G} \mathbf{W} \|_F^2 \right)
\end{align}
\hrulefill
\end{figure*}
To effectively observe the problem (\ref{opt_psi}), we firstly  separate the optimization variable $\mathbf{\Psi}$ from other matrices by using the properties of the trace and norm operators. For the constraint (\ref{opt_psi}{b}), it is transformed into two quadratic and two quartic terms with respect to $\mathbf{\Psi}$ by expanding norm operation. Considering the diagonal structure of $\mathbf{\Psi}$, $\mathbf{\Psi} \mathbf{h}_{rt}$ is equivalently written as $\mathrm{diag}(\mathbf{h}_{rt})\pmb{\psi}$, where we denote $\pmb{\psi}=\mathrm{diag}(\mathbf{\Psi})$. Taking advantage of the property $\mathrm{tr}(\mathbf{X}^H \mathbf{B}\mathbf{X})=\mathbf{x}^H (\mathbf{I}\odot\mathbf{B})\mathbf{x}$ when $\mathbf{X}$ is a diagonal matrix \cite{Matrix}, the two quadratic terms can be given as
\begin{align}
\|\mathbf{\Psi}^H \mathbf{G}\mathbf{W}\|_F^2+2 \sigma_0^2\|\mathbf{\Psi}\|_F^2
\longrightarrow \pmb{\psi}^H \mathbf{A}_2 \pmb{\psi},
\end{align}
where we denote $\mathbf{A}_2 = \mathbf{I}_N \odot \overline{\mathbf{W}}+2 \sigma_0^2 \mathbf{I}_N$ and $\overline{\mathbf{W}} = \mathbf{G}\mathbf{W} \mathbf{W}^H \mathbf{G}^H$.
For the quartic terms in the constraint (\ref{opt_psi}{b}), we employ the property of trace operation $\mathrm{tr}(\mathbf{A}\mathbf{B}\mathbf{A}^H\mathbf{C})=\mathrm{vec}^H(\mathbf{A})(\mathbf{B}^T\otimes\mathbf{C})\mathrm{vec}(\mathbf{A})$ \cite{Matrix} so that we can obtain
\begin{align}
    \mathrm{vec}^H(\pmb{\psi} \pmb{\psi}^H) \mathbf{V}_2 \mathrm{vec}(\pmb{\psi} \pmb{\psi}^H),
\end{align}
where 
\begin{subequations}
    \begin{align}
        &\overline{\mathbf{W}}_t=\mathrm{diag}(\mathbf{h}_{rt}^H) \mathbf{G} \mathbf{W} \mathbf{W}^H \mathbf{G}^H  \mathrm{diag}(\mathbf{h}_{rt}),\\
        &\overline{\mathbf{H}}_{rt} = \mathrm{diag}(\mathbf{h}_{rt}^H) \mathrm{diag}(\mathbf{h}_{rt}),\\
        &\mathbf{V}_2 = \overline{\mathbf{H}}_{rt}^T \otimes \overline{\mathbf{W}}_t + \sigma_0^2 (\overline{\mathbf{H}}_{rt}^T \otimes \overline{\mathbf{H}}_{rt}).
    \end{align}
\end{subequations}
As a result, the constraint (\ref{opt_psi}{b}) can be reformulated as
\begin{align}
    \pmb{\psi}^H \mathbf{A}_2 \pmb{\psi}+ \mathrm{vec}^H(\pmb{\psi} \pmb{\psi}^H) \mathbf{V}_2 \mathrm{vec}(\pmb{\psi} \pmb{\psi}^H)  \leq \overline{P}_{RIS}.
\end{align}
Similarly, with the relationship of $\mathbf{\Psi} \mathbf{h}_{r,k} = \mathrm{diag}(\mathbf{h}_{r,k}) \pmb{\psi}$, the constraint (\ref{opt_psi}{c}) can be readily transformed as
\begin{align}
 &\pmb{\psi}^H\overline{\mathbf{W}}_{k,k} \pmb{\psi} \geq \frac{\tau_k}{1+\tau_k}(\pmb{\psi}^H\overline{\mathbf{W}}_k \pmb{\psi} +\sigma_0^2 \pmb{\psi}^H\overline{\mathbf{H}}_{r,k}\pmb{\psi}+\sigma_k^2) ,
\end{align} 
where $\overline{\mathbf{W}}_{k,k}\!=\!\mathrm{diag}(\mathbf{h}_{r,k}^H)\mathbf{G}\mathbf{w}_k\mathbf{w}_k^H \mathbf{G}^H\mathrm{diag}(\mathbf{h}_{r,k})$,$\overline{\mathbf{W}}_k\!=\!\mathrm{diag}(\mathbf{h}_{r,k}^H)\mathbf{G}\mathbf{W}\mathbf{W}^H \mathbf{G}^H\mathrm{diag}(\mathbf{h}_{r,k})$ and $\overline{\mathbf{H}}_{r,k} \!=\!\! \mathrm{diag}(\mathbf{h}_{r,k}^H)\mathrm{diag}(\mathbf{h}_{r,k})$.
The constraint (\ref{opt_psi}{d}) contains quartic terms, quadratic terms, and constant terms, which can be transformed in a similar manner and rewritten as
\begin{align}
   \pmb{\psi}^H\overline{\mathbf{W}}_t \pmb{\psi}\pmb{\psi}^H\overline{\mathbf{U}}_t\pmb{\psi} \!\geq \! \tau_r \sigma_0^2(\pmb{\psi}^H\overline{\mathbf{H}}_{rt}  \pmb{\psi}\pmb{\psi}^H\overline{\mathbf{U}}_t \pmb{\psi}\!+\!\pmb{\psi}^H\overline{\mathbf{G}}\pmb{\psi}) \!+\!\sigma_r^2\tau_r,
\end{align}
where we denote $\overline{\mathbf{W}}_t\!=\!\!\mathrm{diag}(\mathbf{h}_{rt}^H)\mathbf{G}\mathbf{W}\mathbf{W}^H \mathbf{G}^H\mathrm{diag}(\mathbf{h}_{rt}) $, $\overline{\mathbf{U}}_t\!=\!\mathrm{diag}(\mathbf{h}_{rt}^H)\mathbf{G}\mathbf{u}\mathbf{u}^H \mathbf{G}^H\mathrm{diag}(\mathbf{h}_{rt}) $ and $\overline{\mathbf{H}}_{rt}\! = \!\mathrm{diag}(\mathbf{h}_{rt}^H)\mathrm{diag}(\mathbf{h}_{rt})$, $\overline{\mathbf{G}}\! \!=\!\! \mathbf{G}\mathbf{G}^H \!\!\odot \!\!\mathbf{I}$.
By integrating the quartic terms, constraint (\ref{opt_psi}{d}) is further transformed to
\begin{align}
    \mathrm{vec}^H(\pmb{\psi} \pmb{\psi}^H) \mathbf{V}_3 \mathrm{vec}(\pmb{\psi} \pmb{\psi}^H)+ \pmb{\psi}^H \mathbf{A}_3 \pmb{\psi}+ \sigma_r^2\tau_r \leq 0,
\end{align}
where we denote $\mathbf{V}_3=\tau_r \sigma_0^2 (\overline{\mathbf{H}}_{rt}^T \otimes \overline{\mathbf{U}}_t)-(\overline{\mathbf{W}}_{t}^T \otimes \overline{\mathbf{U}}_t)$ and $\mathbf{A}_3=\tau_r \sigma_0^2 \overline{\mathbf{G}}$.
To address the objective function (\ref{opt_psi}{a}), utilizing the notations defined earlier, we can integrate the quartic terms and the quadratic terms separately, which allows us to express it in a compact form as 
\begin{align}
   & 2\mathfrak{R} \left\lbrace \mathbf{b}_1^H \pmb{\psi} \right\rbrace -\pmb{\psi}^H \mathbf{A}_1 \pmb{\psi}-\mathrm{vec}^H(\pmb{\psi} \pmb{\psi}^H) \mathbf{V}_1 \mathrm{vec}(\pmb{\psi} \pmb{\psi}^H),
\end{align}
where 
\begin{subequations} \label{nota4}
\begin{align}
    {\mathbf{b}_1} & {= \sum_{k=1}^{K}\sqrt{1+\rho_k} \nu_k^* \mathrm{diag}(\mathbf{h}_{r,k})\mathbf{G} \mathbf{w}_k} , \tag{\ref{nota4}{a}}\\
     \mathbf{A}_1 &= \sum_{k=1}^{K}|\nu_k|^2 \left[\overline{\mathbf{W}}_k+\sigma_0^2 \overline{\mathbf{H}}_{r,k}\right]+ \mu \zeta \mathbf{A}_2 ,\tag{\ref{nota4}{b}}\\
     \mathbf{V}_1 &= \mu \zeta \mathbf{V}_2.\tag{\ref{nota4}{c}}
\end{align}    
\end{subequations}
To facilitate a more compact expression, we define a vector $\mathbf{x}$ as the vectorization of the outer product of $\pmb{\psi}$ and its Hermitian transpose, i.e., $\mathbf{x} = \mathrm{vec}(\pmb{\psi}\pmb{\psi}^H)$. This allows us to recast the problem (\ref{opt_psi}) in terms of $\pmb{\psi}$ as
\begin{subequations}\label{opt_psi2}
\begin{align} 
\max _{\pmb{\psi}} &~  2\mathfrak{R} \left\lbrace \mathbf{b}_1^H \pmb{\psi} \right\rbrace- \pmb{\psi}^H \mathbf{A}_1\pmb{\psi} - \mathbf{x}^H \mathbf{V}_1\mathbf{x}, \tag{\ref{opt_psi2}{a}} \\ 
s.t.&~  \pmb{\psi}^H \mathbf{A}_2 \pmb{\psi}+ \mathbf{x}^H \mathbf{V}_2 \mathbf{x}  \leq \overline{P}_{RIS},  \tag{\ref{opt_psi2}{b}} \\
&\pmb{\psi}^H\overline{\mathbf{W}}_{\!k,k} \pmb{\psi} \!\geq\! \frac{\tau_k}{1\!+\!\tau_k} \!(\pmb{\psi}^H\overline{\mathbf{W}}_k \pmb{\psi} \!+\!\sigma_0^2 \pmb{\psi}^H\overline{\mathbf{H}}_{r,k}\pmb{\psi}\!+\!\sigma_k^2) , \tag{\ref{opt_psi2}{c}} \\
& \mathbf{x}^H\mathbf{V}_3 \mathbf{x}+ \pmb{\psi}^H \mathbf{A}_3 \pmb{\psi}+ \sigma_r^2\tau_r \leq 0,  \tag{\ref{opt_psi2}{d}}\\
& |\pmb{\psi}_n| \leq a_{max}, \forall n=1,\dots,N ,\tag{\ref{opt_psi2}{e}}
\end{align}
\end{subequations}
The optimization problem (\ref{opt_psi2}) poses a significant challenge due to two primary factors. Firstly, the problem involves a combination of quartic, quadratic, and linear terms with respect to $\pmb{\psi}$, making it difficult to solve directly using conventional optimization techniques. Secondly, the constraint (\ref{opt_psi2}{c}) takes the form "Convex $\geq$ Convex," rendering it non-convex. To address the first challenge, we consider employing the MM framework, which involves constructing a surrogate function to approximate the high-order terms. To construct such a surrogate function, we first introduce the following lemma:
\begin{lemma}
For any vector $\mathbf{x}$ and any Hermitian matrix $\mathbf{A}$, the following inequality holds:
\begin{align}
    \!\!\!\mathbf{x}^H \mathbf{A} \mathbf{x} &\!\leq \!\lambda \mathbf{x}^H \mathbf{x}\!+ \!\mathbf{x}_s^H (\lambda\mathbf{I}\!-\!\mathbf{A})\mathbf{x}_s \!+\! 2 \mathfrak{R}\left\lbrace \mathbf{x}^H(\mathbf{A}\!-\!\lambda\mathbf{I})\mathbf{x}_s\right\rbrace\!,
\end{align}
where $\mathbf{x}_s$ is a fixed point,  while $\lambda$ represents the maximum eigenvalue of the matrix $\mathbf{A}$.
\label{lm-1}
\end{lemma}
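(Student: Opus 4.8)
The plan is to derive the bound from the positive semidefiniteness of the matrix $\lambda\mathbf{I}-\mathbf{A}$. Since $\lambda$ is defined to be the largest eigenvalue of the Hermitian matrix $\mathbf{A}$, every eigenvalue of $\lambda\mathbf{I}-\mathbf{A}$ is nonnegative, so $\lambda\mathbf{I}-\mathbf{A}\succeq 0$. The entire inequality is then nothing more than the statement that a quadratic form in the error vector $\mathbf{x}-\mathbf{x}_s$ associated with this positive semidefinite matrix is nonnegative. This is the standard device for building a separable MM surrogate: replacing $\mathbf{A}$ by $\lambda\mathbf{I}$ in the leading quadratic term while absorbing the discrepancy into lower-order terms anchored at the fixed point $\mathbf{x}_s$.

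Concretely, I would first establish $\lambda\mathbf{I}-\mathbf{A}\succeq 0$ via the eigenvalue characterization of $\lambda$, and then write down the obvious inequality
\begin{align}
(\mathbf{x}-\mathbf{x}_s)^H (\lambda\mathbf{I}-\mathbf{A})(\mathbf{x}-\mathbf{x}_s) \geq 0. \notag
\end{align}
The next step is to expand this quadratic form into four terms. The two cross terms $\mathbf{x}^H(\lambda\mathbf{I}-\mathbf{A})\mathbf{x}_s$ and $\mathbf{x}_s^H(\lambda\mathbf{I}-\mathbf{A})\mathbf{x}$ are mutual conjugates because $\lambda\mathbf{I}-\mathbf{A}$ is Hermitian, so their sum collapses to $2\,\mathfrak{R}\{\mathbf{x}^H(\lambda\mathbf{I}-\mathbf{A})\mathbf{x}_s\}$. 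Isolating $\mathbf{x}^H\mathbf{A}\mathbf{x}$ on one side and using $-(\lambda\mathbf{I}-\mathbf{A})=\mathbf{A}-\lambda\mathbf{I}$ to flip the sign inside the real-part term then reproduces the claimed inequality, with the residual term $\mathbf{x}_s^H(\lambda\mathbf{I}-\mathbf{A})\mathbf{x}_s$ appearing as the constant offset that depends only on the fixed point.

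The argument involves no genuine obstacle; the only point requiring minor care is the bookkeeping of the complex cross terms, where one must invoke the Hermitian symmetry of $\lambda\mathbf{I}-\mathbf{A}$ to justify combining the two conjugate inner products into a single $2\,\mathfrak{R}\{\cdot\}$ expression, and then track the sign change when rewriting $\lambda\mathbf{I}-\mathbf{A}$ as $-(\mathbf{A}-\lambda\mathbf{I})$. It is also worth noting for the subsequent use in the MM framework that equality holds at $\mathbf{x}=\mathbf{x}_s$, so the surrogate touches the original function at the current iterate, which is precisely the tangency property needed to guarantee monotonic ascent of the objective $f_2(\mathbf{\Psi})$.
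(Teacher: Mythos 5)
Your proof is correct and rests on the same key fact as the paper's: since $\lambda$ is the largest eigenvalue of the Hermitian matrix $\mathbf{A}$, the matrix $\lambda\mathbf{I}-\mathbf{A}$ is positive semidefinite. The paper reaches the inequality by invoking the first-order convexity condition for $f(\mathbf{x})=\mathbf{x}^H(\lambda\mathbf{I}-\mathbf{A})\mathbf{x}$ at the point $\mathbf{x}_s$, which for a quadratic is algebraically identical to your direct expansion of $(\mathbf{x}-\mathbf{x}_s)^H(\lambda\mathbf{I}-\mathbf{A})(\mathbf{x}-\mathbf{x}_s)\geq 0$, so the two arguments are essentially the same.
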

\begin{proof}
Please refer the derivation to Appendix \ref{app_a}.
\end{proof}
To address the quartic term $\mathbf{x}^H \mathbf{V}_1 \mathbf{x}$ in the objective function (\ref{opt_psi2}{a}), we can leverage the surrogate function introduced in Lemma \ref{lm-1}. The surrogate function for this term is given as
\begin{align} \label{surr1}
    \!\!\!\lambda_{v1} \mathbf{x}^H \mathbf{x} \!+\! 2 \mathfrak{R}\lbrace \mathbf{x}^H (\mathbf{V}_1 \!-\!\lambda_{v1}\mathbf{I})\mathbf{x}_s\rbrace \!+\! \mathbf{x}_s^H  (\lambda_{v1}\mathbf{I}\!-\!\mathbf{V}_1)  \mathbf{x}_s,
\end{align}
where $\lambda_{v1}$ is the maximum eigenvalue of $\mathbf{V}_1$, and the fixed point $\mathbf{x}_s$ is obtained by selecting the optimized result $\mathbf{x}$ from the last iteration step. The term $\mathbf{x}^H \mathbf{x}$ in (\ref{surr1}) remains quartic with respect to $\pmb{\psi}$. Since $|\pmb{\psi}_n| \leq a_{max}$, an upper bound for $\mathbf{x}^H \mathbf{x}$ can be derived, leading to
\begin{align}
    \mathbf{x}^H \mathbf{x}&=\mathrm{vec}^H(\pmb{\psi}\pmb{\psi}^H)\mathrm{vec}(\pmb{\psi}\pmb{\psi}^H)=(\pmb{\psi}^* \otimes \pmb{\psi})^H (\pmb{\psi}^* \otimes \pmb{\psi})\notag\\
    &=(\pmb{\psi}^T \pmb{\psi}^*)\otimes (\pmb{\psi}^H \pmb{\psi})=  \|\pmb{\psi}\|^4 \leq N^2 a_{max}^4.
\end{align}
The upper bound becomes tight when  all the  active RIS elements reach their maximum amplification limit $a_{max}$. Consequently, the final surrogate function of the objective function (\ref{opt_psi2}{a}) can be expressed as
\begin{align} \label{surr2}
    \!\!\!\!\!\!\lambda_{v1} N^2 a_{max}^4 \!\!+\! 2 \mathfrak{R}\lbrace \mathbf{x}^H (\mathbf{V}_1 \!-\!\lambda_{v1}\mathbf{I})\mathbf{x}_s\!\rbrace \!+\! \mathbf{x}_s^H  (\lambda_{v1}\mathbf{I}\!-\!\mathbf{V}_1\!)  \mathbf{x}_s.\!
\end{align}
By omitting the constant term and leveraging the properties of vectorization, the objective function can be reformulated as
\begin{align}
    2\mathfrak{R} \left\lbrace \mathbf{b}_1^H \pmb{\psi} \right\rbrace- \pmb{\psi}^H \mathbf{A}_1\pmb{\psi} -  \pmb{\psi}^H (\overline{\mathbf{X}}_{s1}+\overline{\mathbf{X}}_{s1}^H) \pmb{\psi},
\end{align}
where we denote $\overline{\mathbf{x}}_{s1}\!\!=\!(\mathbf{V}_1 -\lambda_{v1}\mathbf{I})\mathbf{x}_s$ and $\overline{\mathbf{X}}_{s1} =\mathrm{unvec}(\overline{\mathbf{x}}_{s1})$.

With similar processes, the constraints  (\ref{opt_psi2}{b}) and (\ref{opt_psi2}{d}) can be transformed respectively as
\begin{subequations}\label{nota_psi_5}
\begin{align}
    &\pmb{\psi}^H (\overline{\mathbf{X}}_{s2}+\overline{\mathbf{X}}_{s2}^H) \pmb{\psi}+\pmb{\psi}^H \mathbf{A}_2\pmb{\psi} +c_2 \leq \overline{P}_{RIS}, \tag{\ref{nota_psi_5}{a}}\\
    &\pmb{\psi}^H (\overline{\mathbf{X}}_{s3}+\overline{\mathbf{X}}_{s3}^H) \pmb{\psi}+\pmb{\psi}^H \mathbf{A}_3\pmb{\psi}+\sigma_r^2\tau_r +c_3  \leq 0 ,\tag{\ref{nota_psi_5}{b}}
\end{align}  
\end{subequations}
where 
\begin{subequations}
    \begin{align}
\overline{\mathbf{X}}_{s2}&=\mathrm{unvec}[(\mathbf{V}_2 -\lambda_{v2}\mathbf{I})\mathbf{x}_s],\\
    c_2 &= \mathbf{x}_s^H  (\lambda_{v2}\mathbf{I}-\mathbf{V}_2)  \mathbf{x}_s + \lambda_{v2} N^2 a_{max}^4,\\   \overline{\mathbf{X}}_{s3}&=\mathrm{unvec}[(\mathbf{V}_{3}-\lambda_{v3}\mathbf{I}_N)\mathbf{x}_s],\\
    c_3 &= \mathbf{x}_{s}^H(\lambda_{v3}\mathbf{I}_N-\mathbf{V}_{3}) \mathbf{x}_{s}+\lambda_{v3} N^2 a_{max}^4.
\end{align}
\end{subequations}
It is noteworthy that determining the maximum eigenvalues $\lambda_{v1}$, $\lambda_{v2}$, and $\lambda_{v3}$ can be a computationally demanding task, particularly for large values of $N$, as the matrices $\mathbf{V}_1$, $\mathbf{V}_2$, and $\mathbf{V}_3$ have dimensions of $N^2 \times N^2$. To alleviate the computational burden associated with finding the eigenvalues, we propose the following proposition:
\begin{proposition}
The maximum eigenvalue $\lambda_{v1}$ of $\mathbf{V}_1$, $\lambda_{v2}$ of $\mathbf{V}_2$ and $\lambda_{v3}$ of $\mathbf{V}_3$ are given as
\begin{subequations} \label{surr_eigen}
    \begin{align}
    \lambda_{v1} &= \mu \zeta \lambda_{v2}, \\
    \lambda_{v2} &\approx \lambda_w \max\lbrace \overline{\mathbf{H}}_{rt} \rbrace, \\
    \lambda_{v3} &= \lambda_{t} \rm\textbf{\rm tr}(\overline{\mathbf{U}}_t) ,
\end{align}
\end{subequations}
where $\lambda_w$ is the maximum eigenvalue of $ \overline{\mathbf{W}}_{t2} = \mathbf{W}^H \mathbf{G}^H  \textbf{\rm diag}(\mathbf{h}_{rt})\textbf{\rm diag}(\mathbf{h}_{rt}^H) \mathbf{G} \mathbf{W} \in \mathbb{C}^{(M+K) \times (M+K)}$, and $\lambda_{t}$ is the maximum eigenvalue of $\tau_r \sigma_0^2\overline{\mathbf{H}}_{rt} - \overline{\mathbf{W}}_{t} \in \mathbb{C}^{N \times N}$.
\label{prop-1}
\end{proposition}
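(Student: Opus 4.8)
The plan is to avoid ever forming the $N^2 \times N^2$ matrices $\mathbf{V}_1,\mathbf{V}_2,\mathbf{V}_3$ explicitly, and instead to exploit their Kronecker structure together with the standard spectral fact that the eigenvalues of $\mathbf{A}\otimes\mathbf{B}$ are exactly the pairwise products of the eigenvalues of $\mathbf{A}$ and those of $\mathbf{B}$. The first identity is immediate: since $\mathbf{V}_1=\mu\zeta\mathbf{V}_2$ with $\mu\zeta>0$, every eigenvalue of $\mathbf{V}_1$ is $\mu\zeta$ times an eigenvalue of $\mathbf{V}_2$, so in particular the largest scales identically and $\lambda_{v1}=\mu\zeta\lambda_{v2}$.

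For $\lambda_{v2}$, I would first invoke the bilinearity of the Kronecker product on the definition of $\mathbf{V}_2$ to collapse the two summands into a single product, $\mathbf{V}_2=\overline{\mathbf{H}}_{rt}^T\otimes(\overline{\mathbf{W}}_t+\sigma_0^2\overline{\mathbf{H}}_{rt})$. Both factors are Hermitian positive semidefinite ($\overline{\mathbf{H}}_{rt}$ is diagonal with entries $|h_{rt,n}|^2$, and $\overline{\mathbf{W}}_t$ is a Gram matrix), so the largest eigenvalue of their Kronecker product is the product of their largest eigenvalues, giving $\lambda_{v2}=\max\lbrace\overline{\mathbf{H}}_{rt}\rbrace\cdot\lambda_{\max}(\overline{\mathbf{W}}_t+\sigma_0^2\overline{\mathbf{H}}_{rt})$. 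The approximation in the statement then arises from discarding the comparatively small RIS-noise floor $\sigma_0^2\overline{\mathbf{H}}_{rt}$, i.e. $\lambda_{\max}(\overline{\mathbf{W}}_t+\sigma_0^2\overline{\mathbf{H}}_{rt})\approx\lambda_{\max}(\overline{\mathbf{W}}_t)$. To turn the remaining $N\times N$ eigenvalue problem into the cheaper $(M+K)\times(M+K)$ one, I would write $\overline{\mathbf{W}}_t=\mathbf{P}\mathbf{P}^H$ with $\mathbf{P}=\mathrm{diag}(\mathbf{h}_{rt}^H)\mathbf{G}\mathbf{W}$, so that $\overline{\mathbf{W}}_{t2}=\mathbf{P}^H\mathbf{P}$; since $\mathbf{P}\mathbf{P}^H$ and $\mathbf{P}^H\mathbf{P}$ share their nonzero eigenvalues, $\lambda_{\max}(\overline{\mathbf{W}}_t)=\lambda_w$, which yields $\lambda_{v2}\approx\lambda_w\max\lbrace\overline{\mathbf{H}}_{rt}\rbrace$ and is the main computational payoff of the proposition.

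For $\lambda_{v3}$, bilinearity again factors the difference as $\mathbf{V}_3=(\tau_r\sigma_0^2\overline{\mathbf{H}}_{rt}-\overline{\mathbf{W}}_t)^T\otimes\overline{\mathbf{U}}_t$. The crucial observation is that $\overline{\mathbf{U}}_t=\mathrm{diag}(\mathbf{h}_{rt}^H)\mathbf{G}\mathbf{u}\mathbf{u}^H\mathbf{G}^H\mathrm{diag}(\mathbf{h}_{rt})$ is rank-one positive semidefinite, so its unique nonzero eigenvalue equals its trace $\mathrm{tr}(\overline{\mathbf{U}}_t)$ while all others vanish. Writing $\mathbf{M}=\tau_r\sigma_0^2\overline{\mathbf{H}}_{rt}-\overline{\mathbf{W}}_t$, which is Hermitian (so $\mathbf{M}^T$ carries the same real spectrum) but in general indefinite, the eigenvalues of $\mathbf{V}_3$ are the products $\lambda_i(\mathbf{M})\cdot\beta$ with $\beta\in\lbrace\mathrm{tr}(\overline{\mathbf{U}}_t),0\rbrace$. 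Because $\mathrm{tr}(\overline{\mathbf{U}}_t)>0$, the largest such product is obtained by pairing it with $\lambda_{\max}(\mathbf{M})=\lambda_t$, giving the exact identity $\lambda_{v3}=\lambda_t\,\mathrm{tr}(\overline{\mathbf{U}}_t)$.

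The step I expect to need the most care is precisely this $\lambda_{v3}$ case: because the first Kronecker factor $\mathbf{M}$ is indefinite, one cannot simply quote ``product of the two largest eigenvalues,'' and must instead enumerate the admissible products and use the sign structure, namely positivity of both $\mathrm{tr}(\overline{\mathbf{U}}_t)$ and $\lambda_t$, to confirm which pairing attains the maximum. The only other non-rigorous element is the $\approx$ in $\lambda_{v2}$, which I would justify as a modeling approximation that is tight whenever the active-RIS noise power $\sigma_0^2$ is small relative to the reflected signal power.
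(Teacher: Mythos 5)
Your proposal is correct and follows essentially the same route as the paper's Appendix B: the scalar relation for $\lambda_{v1}$, collapsing $\mathbf{V}_2$ via Kronecker bilinearity and dropping the $\sigma_0^2\overline{\mathbf{H}}_{rt}$ term before passing to the Gram-matrix companion $\overline{\mathbf{W}}_{t2}$, and exploiting the rank-one structure of $\overline{\mathbf{U}}_t$ for $\lambda_{v3}$. Your explicit enumeration of the pairwise eigenvalue products for the indefinite factor in the $\lambda_{v3}$ case is in fact a touch more careful than the paper's argument, which simply multiplies the two maximum eigenvalues without remarking on the sign issue.
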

\begin{proof}
Please refer the derivation to Appendix \ref{app_b}.
\end{proof}
By substituting the surrogate functions for the quartic terms, the optimization problem (\ref{opt_psi2}) can be transformed into
\begin{subequations}\label{opt_psi3}
\begin{align} 
\max _{\pmb{\psi}} &~ 2\mathfrak{R} \left\lbrace \mathbf{b}_1^H \pmb{\psi} \right\rbrace- \pmb{\psi}^H \mathbf{A}_1\pmb{\psi} -  \pmb{\psi}^H (\overline{\mathbf{X}}_{s1}+\overline{\mathbf{X}}_{s1}^H) \pmb{\psi}, \tag{\ref{opt_psi3}{a}} \\ 
s.t.&~  \pmb{\psi}^H (\overline{\mathbf{X}}_{s2}+\overline{\mathbf{X}}_{s2}^H) \pmb{\psi}+\pmb{\psi}^H \mathbf{A}_2\pmb{\psi} +c_2 \leq \overline{P}_{RIS}, \tag{\ref{opt_psi3}{b}} \\
& \!\!\!\pmb{\psi}^H\overline{\mathbf{W}}_{k,k} \pmb{\psi} \!\geq \!\frac{\tau_k}{1\!+\!\tau_k}\!(\pmb{\psi}^H\overline{\mathbf{W}}_k \pmb{\psi} \!+\!\sigma_0^2 \pmb{\psi}^H\overline{\mathbf{H}}_{r,k}\pmb{\psi}\!+\!\sigma_k^2),\!   \tag{\ref{opt_psi3}{c}} \\
& \pmb{\psi}^H (\overline{\mathbf{X}}_{s3}+\overline{\mathbf{X}}_{s3}^H) \pmb{\psi}+\pmb{\psi}^H \mathbf{A}_3\pmb{\psi}+\sigma_r^2\tau_r +c_3  \leq 0,  \tag{\ref{opt_psi3}{e}}\\
& |\pmb{\psi}_{n}| \leq a_{max}, \forall n=1,\dots,N .\tag{\ref{opt_psi3}{d}}
\end{align}
\end{subequations}
The optimization problem (\ref{opt_psi3}) is a non-convex quadratic constrained quadratic programming (QCQP) problem due to the non-positive semidefiniteness of $\overline{\mathbf{X}}_{s1}\!+\!\overline{\mathbf{X}}_{s1}^H$ and the non-convexity of  constraint (\ref{opt_psi3}{c}). To address this non-convexity, we employ SDR approach. By letting $\overline{\pmb{\psi}} =\begin{bmatrix}
    \pmb{\psi}^T & 1
\end{bmatrix}^T\!$ and $\overline{\mathbf{\Psi}} \!=\!\overline{\pmb{\psi}} \overline{\pmb{\psi}}^H$, 
the problem can be transformed as
\begin{subequations}\label{opt_psi5}
\begin{align} 
\max _{\overline{\mathbf{\Psi}} } &~  \mathrm{tr}(\overline{\mathbf{A}}_1\overline{\mathbf{\Psi}}), \tag{\ref{opt_psi5}{a}} \\ 
s.t.&~  \mathrm{tr}(\overline{\mathbf{A}}_2\overline{\mathbf{\Psi}}) +c_2 \leq \overline{P}_{RIS}, \tag{\ref{opt_psi5}{b}} \\
& \mathrm{tr}( \overline{\mathbf{A}}_{k,k}\overline{\mathbf{\Psi}}) \geq \frac{\tau_k}{1+\tau_k}(\mathrm{tr}( \overline{\mathbf{A}}_{k}\overline{\mathbf{\Psi}} ) +\sigma_k^2),   \tag{\ref{opt_psi5}{c}} \\
&  \mathrm{tr}( \overline{\mathbf{A}}_{3} 
 \overline{\mathbf{\Psi}})+\sigma_r^2\tau_r +c_3  \leq 0,  \tag{\ref{opt_psi5}{d}}\\
& \mathbf{\Psi}_{n,n} \leq a_{max}^2, \forall n=1,\dots,N, \mathbf{\Psi}_{N+1,N+1}=1 , \tag{\ref{opt_psi5}{e}}\\
&\overline{\mathbf{\Psi}} \succeq 0, \mathrm{rank}(\overline{\mathbf{\Psi}})=1, \tag{\ref{opt_psi5}{f}}
\end{align}
\end{subequations}
where we denote
\begin{subequations}
    \begin{align}
    &\overline{\mathbf{A}}_1 = \begin{bmatrix}
        -(\mathbf{A}_1+\overline{\mathbf{X}}_{s1}+\overline{\mathbf{X}}_{s1}^H) & \mathbf{b}_1\\
       \mathbf{b}_1^H & 0\\
    \end{bmatrix},\\
    &\overline{\mathbf{A}}_2 = \begin{bmatrix}
        -(\mathbf{A}_2+\overline{\mathbf{X}}_{s2}+\overline{\mathbf{X}}_{s2}^H) & \mathbf{0}\\
       \mathbf{0}^T & 0\\
    \end{bmatrix},\\
    &\overline{\mathbf{A}}_{k,k} = \begin{bmatrix}
        \overline{\mathbf{W}}_{k,k} & \mathbf{0}\\
       \mathbf{0}^T & 0\\
    \end{bmatrix}, 
    \overline{\mathbf{A}}_{k} = \begin{bmatrix}
    \overline{\mathbf{W}}_{k}+\sigma_0^2 \overline{\mathbf{H}}_{r,k} & \mathbf{0}\\
       \mathbf{0}^T & 0\\
    \end{bmatrix},\\
    &\overline{\mathbf{A}}_{3} = \begin{bmatrix}
(\mathbf{A}_3+\overline{\mathbf{X}}_{s3}+\overline{\mathbf{X}}_{s3}^H) & \mathbf{0}\\
       \mathbf{0}^T & 0\\
    \end{bmatrix}.
\end{align}
\end{subequations}
By eliminating rank-one constraint, the problem (\ref{opt_psi5}) transforms into an SDP problem {\cite{boyd}}, which can be efficiently solved using existing convex optimization toolkits such as SDPT3 {\cite{grant}}. Once the optimal solution $\overline{\mathbf{\Psi}} $ is obtained, various rank-one decomposition methods, such as eigenvalue decomposition and Gaussian randomization \cite{SDR}, can be employed to recover $\pmb{\psi}$. Subsequently, amplification $\mathbf{A}$ and phase shift $\mathbf{\Theta}$ are determined using $\mathbf{A}\!=\!|\mathrm{diag}(\pmb{\psi})|$  and $\mathbf{\Theta}\!=\!\angle[\mathrm{diag}(\pmb{\psi})]$, respectively.

\subsection{Convergence and Complexity Analysis}
The proposed alternative optimization algorithm is summarized in \textbf{Algorithm 1}, which leverages the solutions developed in Sections \ref{pre-processing}, B, C, and D. Algorithm 1 iteratively solves problems (\ref{opt_u}), (\ref{opt_W_5}), and (\ref{opt_psi5}) to obtain the optimal solutions of $\mathbf{W}$, $\mathbf{u}$, and $\mathbf{\Psi}$, respectively. This process continues until the convergence criterion is satisfied, i.e., the change in the value of $\eta$ becomes stable.

\subsubsection{Convergence Analysis}
Given the iterative nature of \textbf{Algorithm 1}, a convergence analysis is crucial to establish its effectiveness. Let ($\mathbf{u}^t$, $\widetilde{\mathbf{W}}_i^t$, $\overline{\mathbf{\Psi}}^t$) denote the obtained feasible results in the $t$-th iteration. The iterative optimization routine is given as $\cdots \longrightarrow$ ($\mathbf{u}^t$, $\widetilde{\mathbf{W}}_i^t$, $\overline{\mathbf{\Psi}}^t$) $\longrightarrow$ ($\mathbf{u}^{t+1}$, $\widetilde{\mathbf{W}}_i^t$, $\overline{\mathbf{\Psi}}^t$)$\longrightarrow$ ($\mathbf{u}^{t+1}$, $\widetilde{\mathbf{W}}_i^{t+1}$, $\overline{\mathbf{\Psi}}^t$)$\longrightarrow$ ($\mathbf{u}^{t+1}$, $\widetilde{\mathbf{W}}_i^{t+1}$, $\overline{\mathbf{\Psi}}^{t+1}$) $\longrightarrow \cdots $. Since the generalized Rayleigh quotient optimization problem (\ref{opt_u}) is constrained by the maximum eigenvalue according to the Rayleigh quotient inequality, it is evident that solving problem (\ref{opt_u}) is a non-decreasing process in the sense that
\begin{align}
    \eta (\mathbf{u}^{t+1}, \widetilde{\mathbf{W}}_i^t, \overline{\mathbf{\Psi}}^t) \geq \eta (\mathbf{u}^t, \widetilde{\mathbf{W}}_i^t, \overline{\mathbf{\Psi}}^t).
\end{align}
On the other hand, the problem (\ref{opt_W_5}) is convex with respect to $\mathbf{W}_i$ and is a maximization process, so that 
\begin{align}
    \eta (\mathbf{u}^{t+1}, \widetilde{\mathbf{W}}^{t+1}_i, \overline{\mathbf{\Psi}}^t) \geq \eta (\mathbf{u}^{t+1}, \widetilde{\mathbf{W}}_i^t, \overline{\mathbf{\Psi}}^t).
\end{align}
Similarly, since $\overline{\mathbf{\Psi}}^{t+1}$ is the optimal solution of the problem (\ref{opt_psi5}), one has
\begin{align}
    \eta (\mathbf{u}^{t+1}, \widetilde{\mathbf{W}}^{t+1}_i, \overline{\mathbf{\Psi}}^{t+1}) \geq \eta (\mathbf{u}^{t+1}, \widetilde{\mathbf{W}}^{t+1}_i, \overline{\mathbf{\Psi}}^t).
\end{align}
Therefore, the objective function $\eta$ is monotonically non-decreasing in each iteration. Furthermore, since the spectrum and energy resources are constrained, the energy efficiency $\eta$ is bounded. Consequently, the proposed algorithm can converge to a stationary point. It is worth mentioning that the Gaussian randomization or maximum eigenvalue approximation for rank-one solution recovery may cause performance loss and thus hard to obtain a global optimal solution to the problem, but alternative optimization is guaranteed to converge to a solution
that satisfies the Karush-Kuhn-Tucker (KKT) conditions, which will be displayed in Section \ref{simulation}.

\begin{algorithm}[!t] 
\caption{Proposed alternative optimization algorithm} 
    \begin{algorithmic}[1] 
        \REQUIRE ~ 
        Channels ${\mathbf{G}}$,  $\mathbf{h}_{r,k}$, $\mathbf{h}_{rt}$. Noise power $\sigma^2_0$, $\sigma^2_k$, $\sigma^2_r$, maximum amplification coefficient $a_{max}$, power budget $P_{ISAC}^{max}$, $P_{RIS}^{max}$, user SINR and echo SNR threshold $\tau_k$, $\tau_r$
        \ENSURE ~ 
        Optimized variables $\mathbf{W}$, $\mathbf{A}$, $\mathbf{\Theta}$, ${\mathbf{u}}$ and the energy efficiency value $\eta$	
        \STATE Initialize transmit beamformer $\mathbf{W}$, echo receiving beamformer $\mathbf{u}$ and active RIS matrix $\mathbf {\Psi}$;
        \WHILE {no convergence of $\eta$}
        \STATE Obtain fractional decoupling variable via $\mu^\star=\eta=\frac{R_b}{P}$;
        \STATE Update auxiliary variables ${\pmb{\rho}}$ and ${\pmb{\nu}}$ by (\ref{auxi1}) and (\ref{auxi2});
        \STATE Update ${\mathbf{u}}$ by calculating the eigenvector of $\mathbf{D}^{-1}\mathbf{C}$ associated with the maximum eigenvalue;
        \STATE Update ${\mathbf{W}}$ by solving problem (\ref{opt_W_5}) and (\ref{w_recover});
        \STATE Update ${\mathbf{\Psi}}$ by solving problem (\ref{opt_psi5}) and performing rank-one decomposition;
        \ENDWHILE	
        \STATE Obtain $\mathbf{A}$ and $\mathbf{\Theta}$ from ${\mathbf \Psi}$;
        \RETURN Optimized $\mathbf{W}$, $\mathbf{A}$, $\mathbf{\Theta}$ and energy efficiency $\eta^\star$. 
    \end{algorithmic}
\end{algorithm}

\subsubsection{Complexity Analysis}
The complexity is another critical aspect  of the algorithm.  For each iteration, the complexities of obtaining $\mathbf{u}$, $\mathbf{W}$, $\mathbf{\Psi}$ are given as follows:
\begin{itemize}
    \item The optimization of $\mathbf{u}$ involves an eigenvalue decomposition process to determine the optimal solution $\mathbf{u}$ of the Rayleigh quotient. This process contributes to the complexity of optimizing $\mathbf{u}$, which is $\mathcal{O}\left(M^3\right)$ \cite{RIS_ISAC_PD}.
    \item The optimization of $\mathbf{W}$ primarily involves solving the SDP problem (\ref{opt_W_5}) and performing the eigenvalue decomposition in (\ref{w_recover}). The complexities arising from other components are negligible compared to the SDP problem and the eigenvalue decomposition. Utilizing the interior point algorithm, the SDP problem (\ref{opt_W_5}), after omitting the rank-one constraint, can be solved with a complexity of $\mathcal{O}\left((M+1)^{4.5} \log(1/\epsilon)\right)\simeq\mathcal{O}\left(M^{4.5} \log(1/\epsilon)\right)$, where $\epsilon$ denotes the desired solution accuracy \cite{SDR}. The eigenvalue decomposition for recovering the rank-one result has a complexity of $\mathcal{O}\left(K(M+1)^3\right)\simeq\mathcal{O}\left(K M^3\right)$.
    \item The optimization of $\mathbf{\Psi}$ involves three primary sources of complexity: finding the eigenvalues for constructing the surrogate functions, solving the SDP problem (\ref{opt_psi5}), and performing the rank-one decomposition. Determining the eigenvalues $\lambda_{v2}$ and $\lambda_{v3}$ using (\ref{surr_eigen}{b}) and (\ref{surr_eigen}{c}) incurs complexities of $\mathcal{O}\left((K+M)^3\right)$ and $\mathcal{O}\left(N^3\right)$, respectively, due to the eigenvalue decomposition. Solving the SDP problem (\ref{opt_psi5}) after omitting the rank-one constraint requires a complexity of $\mathcal{O}\left((N+1)^{4.5} \log(1/\epsilon)\right)\simeq\mathcal{O}\left(N^{4.5} \log(1/\epsilon)\right)$. Finally, recovering $\mathbf{\Psi}$ from $\overline{\mathbf{\Psi}}$ through rank-one decomposition involves a complexity of $\mathcal{O}\left((N+1)^3\right)\simeq\mathcal{O}\left(N^3\right)$.
\end{itemize}

Overall, the computational complexity of each iteration for the \textbf{Algorithm 1} is $\mathcal{O}\left(M^3+M^{4.5} \log(1/\epsilon)+K M^3+(K+M)^3+N^3+N^{4.5}\right.$ 

\noindent $\left. \log(1/\epsilon)+N^3\right)$. 

\section{Simulation Results}\label{simulation}
In this section, we present simulation results to evaluate the effectiveness of the proposed algorithm for maximizing EE in an active RIS-assisted ISAC system. Unless otherwise specified, the BS is equipped with an $8$-antenna ULA array, while the active RIS consists of $32$ reflecting elements. The active RIS amplification is limited to $a_{max}\!=\!10$. 
It is set that BS and active RIS are located at ($0$m, $0$m) and ($250$m, $0$m), respectively.
The system serves $K\!=\!4$ downlink users randomly distributed within a zone centered at ($200$m, $30$m), while the target is located at ($200$m, $-50$m). Additionally, the total system power consumption is restricted to $P\!=\!10$ dBm, with $90\%$ allocated to  BS and the remaining $10\%$ to active RIS. The inverse of the energy conversion coefficients at BS and active RIS are set to $\xi\!=\!\zeta\!=\!1.1$. The static hardware power $P_{ST}$ in BS is set to $-10$  dBm, while hardware power of the amplifier and phase tuner at active RIS are set to $P_{RP}\!=\!P_{RA}\!=\!-30$ dBm. All noise power levels are set to $\sigma_0^2\!=\!\sigma_k^2\!=\!\sigma_r^2\!=\!-80$ dBm. For simplicity, the user SINR threshold $\tau_k$ is set to $10$ dB for all users, while echo SNR threshold is set to $0$ dB. {All the communication channels are modeled as Rician channels according to \cite{RIS_ISAC_jiang}, i.e.
\begin{align}
    \mathbf{H}\! = \!\sqrt{\beta}\left(\sqrt{\frac{\kappa}{\kappa+1}}\mathbf{a}(\upsilon_{r})\mathbf{a}^H(\upsilon_{t})\!+\!\sqrt{\frac{1}{\kappa+1}}\mathbf{H}_{NLoS}\right).
\end{align}
We set the Rician factor $\kappa$ to $10$. The path loss, denoted by $\sqrt{\beta}$, is modeled based on 3GPP standards \cite{threeGPP} \cite{channel}, which is given as $pl_{\text{dB}} = 35.6 + 22.0 \log_{10} d$ (dB) associated with the distance $d$. The noise power spectral density is set to $n_p = -76$ dBm/Hz so that $\beta = 10^{\frac{-n_p - pl_{\text{dB}}}{10}}$.} In addition, the symbol $\mathbf{a}(\upsilon_{t})$ and $\mathbf{a}(\upsilon_{r})$ are the steering vectors associated with the transmit direction $\upsilon_{t}$ and receiving direction $\upsilon_{r}$ and $\mathbf{H}_{NLoS}$ is non-line-of-sight component of the channel, which follows Rayleigh distribution of zero mean and unit variance. Moreover, the sensing channel $\mathbf{h}_{rt}$ is typically modeled as the steering vectors associated with the target direction similar to \cite{Joint_Transmit_Beamforming}, \cite{senchan}. Additionally, the transmit beamformer $\mathbf{W}$, receiving beamformer $\mathbf{u}$ and RIS phase shift $\boldsymbol{\Psi}$ are randomly initialized and the simulations are conducted with $100$ Monte Carlo realizations to avoid the effect of randomness and initialization.

To evaluate the performance of the proposed alternating optimization algorithm, three benchmark approaches are employed for comparisons. The first benchmark is the EE optimization in the passive RIS-assisted ISAC system, presented in \cite{PassiveRISDFRCTGCN}. The second benchmark is the SE optimization in the active RIS-assisted ISAC scenario, based on the work in \cite{ARIS_ISAC_secure}. However, the objective function in \cite{ARIS_ISAC_secure} is secrecy rate, which differs from our objective function. To ensure a fair comparison, the objective function in \cite{ARIS_ISAC_secure} is modified to be SE. The final benchmark is the no optimization case, where all the optimizing variables are set randomly. In addition, simulation parameters such as system power consumption are set the same as those in the active RIS-assisted ISAC scenario for fairness. For convenient expression in simulation figures, our proposed system  is termed as ``Active EE'', the first benchmark is termed as ``Passive EE'', the second benchmark is termed as ``SE'', and the final benchmark is termed as ``No optimization'', respectively.
\begin{figure}[!t]
    \centering
    \includegraphics[scale=0.45]{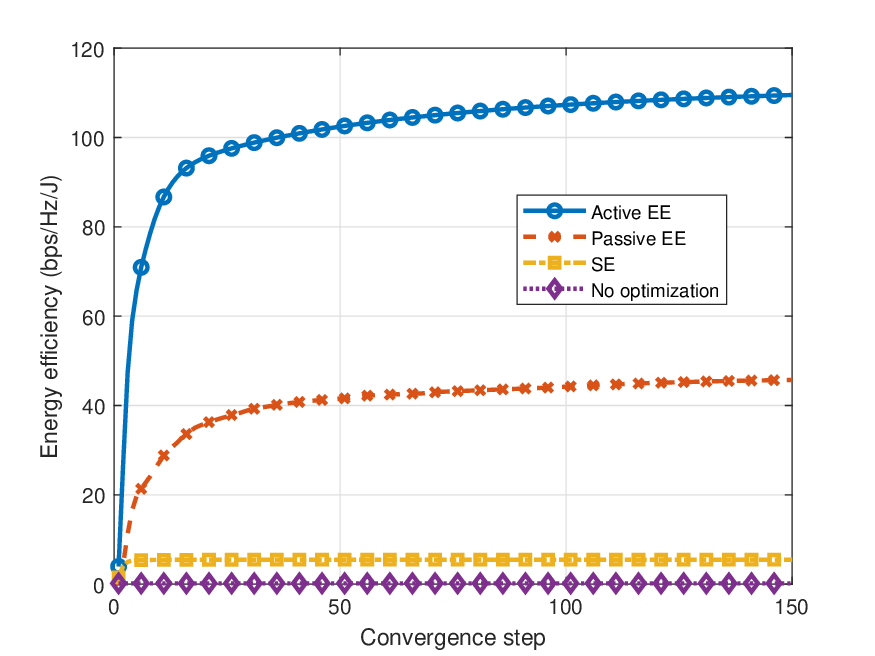}
    \caption{Convergence behaviour of the proposed alternating optimization
algorithm: EE versus the convergence step.}
    \label{fig:Energy efficiency}
\end{figure}

To assess the EE convergence performance of our proposed alternating optimization algorithm for the active RIS-assisted ISAC architecture, we compare it with three benchmark scenarios. Fig. \ref{fig:Energy efficiency} depicts the changes of EE during the convergence process. On the other, it is worth reminding that EE is a quotient of SE and energy consumption, i.e $\eta = \frac{R_b}{P}$ shown in (\ref{EE_concept}). Hence, we provide insights into the interplay between SE and energy consumption during convergence in Figs. \ref{fig:Spectrum efficiency} and \ref{fig:Energy Consumption} for understanding the changes of EE. As shown in Fig. \ref{fig:Energy efficiency}, the EE optimization process exhibits excellent convergence behavior, validating the effectiveness of our proposed algorithm. Furthermore, the ``Active EE'' case achieves higher EE compared to the ``Passive EE'' case due to the active RIS's ability to mitigate the multiplicative fading effect. Notably, the proposed scheme demonstrates remarkable EE compared to the ``SE" optimization case. To delve deeper into the EE gains of our proposed scheme, we examine the changes of SE and energy consumption during convergence in Figs. \ref{fig:Spectrum efficiency} and \ref{fig:Energy Consumption}, respectively. Among all benchmarks, the ``No optimization'' case utilizes all available power but achieves the lowest SE, resulting in the lowest EE. The ``SE'' case achieves the highest SE, twice more than that of the other EE optimization cases. However, to support this high SE, it consumes almost all the allocated energy, approximately eight times more than the ``Passive EE'' case and $100$ times more than the ``Active EE'' case. In comparison to the ``Passive EE'' case, the ``Active EE'' case achieves comparable SE while consuming only $10\%$ of the energy expended in the ``Passive EE'' case. This is attributed to the active RIS's ability to counteract the multiplicative fading effect and reduce energy dissipation during propagation.

\begin{figure}[!t]
    \centering
    \includegraphics[scale=0.45]{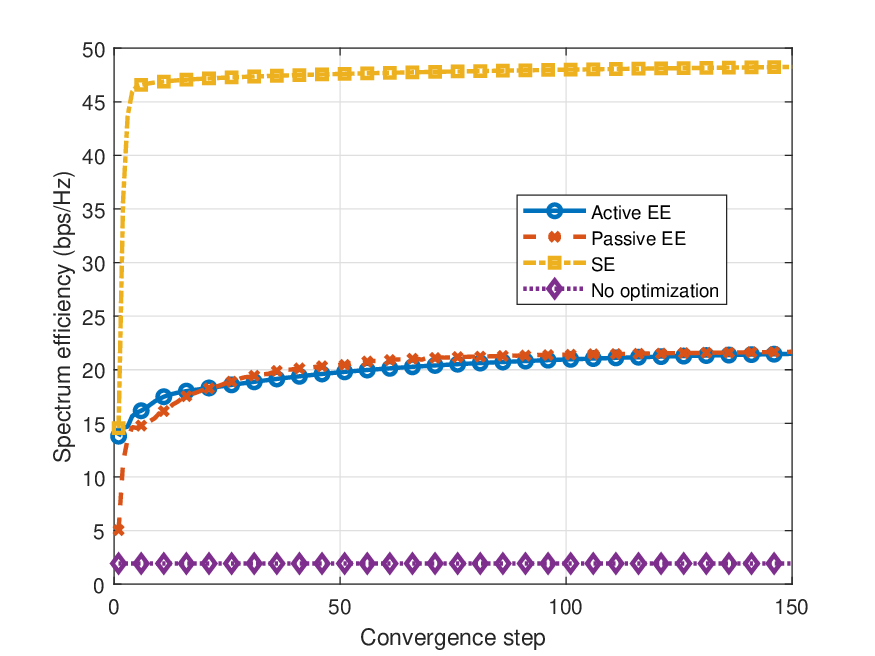}
    \caption{SE performance change during the convergence process: SE versus the convergence step.}
    \label{fig:Spectrum efficiency}
\end{figure}

\begin{figure}[!t]
    \centering
    \includegraphics[scale=0.45]{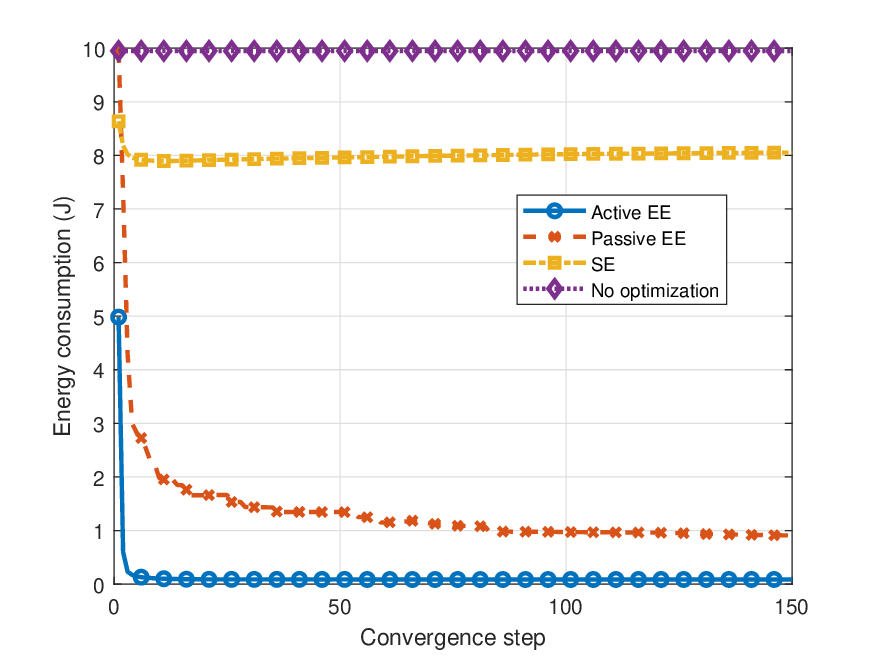}
    \caption{Energy consumption change  during the convergence process: Energy consumption versus the convergence step.}
    \label{fig:Energy Consumption}
\end{figure}
In Fig. \ref{fig:N}, we investigate the relationship between the number of active RIS elements $N$ and the EE of the ISAC system. The number of active RIS elements is varied from $10$ to $80$. As observed in Fig.~\ref{fig:N}, the EE of the ``Active EE" case initially increases as $N$ increases, reaching to a peak, and then decreases. This behavior can be attributed to the interplay between SE and power consumption. When $N$ is small, the increase of $N$ leads to a significant improvement of SE, thereby enhancing the overall EE. However, as $N$ continues to increase, the energy consumed by each RIS element for controlling amplification and phase shift accumulates, resulting in a dramatic increase in the total power consumed by the active RIS. Consequently, the total consumed power outpaces the growth of SE, leading to a decline in EE. The benchmarks exhibit similar trends to the ``Active EE" case due to the same underlying mechanism. On the other hand, the comparison between the ``Active EE" case and the benchmarks reveals that the introduction of active RIS elements in the ISAC system can improve EE, corroborating the findings of previous studies. Additionally, it can be noted that the EE of the ``Active EE" case decreases at a faster rate compared to the ``Passive EE" case. This is understandable because the active RIS must control both the amplifier and the phase tuner, causing its energy consumption to increase more rapidly than that of the passive RIS as $N$ increases. 
\begin{figure}[!t]
    \centering
    \includegraphics[scale=0.45]{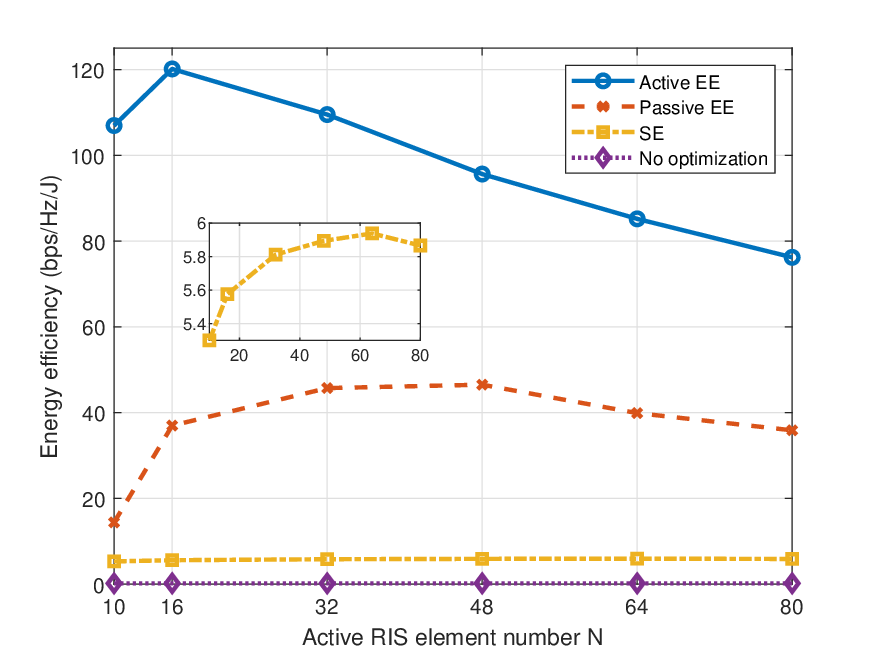}
    \caption{{The effect of the RIS element number $N$ on the EE: EE  versus RIS element number.}}
    \label{fig:N}
\end{figure}

Fig.~\ref{fig:tauk_effect} illustrates the effect of different user SINR thresholds $\tau_k$ on the EE of the ISAC system. The simulated $\tau_k$ values range from $10$ to $50$ in this figure{\footnote{{The user SINR thresholds $\tau_k$ in Fig. \ref{fig:tauk_effect} as well as the echo SNR threshold $\tau_r$ in Fig. \ref{fig:echo SNR} are measured in a linear scale, which are unitless.}}}. As shown in the figure, the EE decreases as $\tau_k$ increases for both ``Active EE" and ``Passive EE" cases. In contrast, the changes of $\tau_k$ have minimal effects on the ``SE" and ``No optimization" cases. This phenomenon can be explained as follows. In both ``Active EE" and ``Passive EE" cases, the goal is to achieve higher SE while minimizing energy consumption under the given constraints. When $\tau_k$ increases, more energy should be consumed to satisfy the stricter constraints, which limits the potential improvement of EE. For the ``SE" case, the EE decreases slightly as $\tau_k$ increases. This is because in most cases, the total energy budget is sufficient to optimize the objective and satisfy the constraints. When $\tau_k$ increases, this constraint only affects a few cases where the channel condition of some users is particularly poor. On average, the EE in the ``SE" case decreases with increasing $\tau_k$, but the effect is not significant. As expected, the ``No optimization" case exhibits no noticeable changes of EE when $\tau_k$ varies, as no optimizations are performed. Furthermore, it can be observed that the active RIS-aided ISAC system can achieve higher EE compared to the passive RIS-aided ISAC system, aligning with the findings of previous experiments. Additionally, using EE as the objective goal instead of SE leads to a more economical scheme for saving energy consumption.

\begin{figure}[!t]
    \centering
    \includegraphics[scale=0.45]{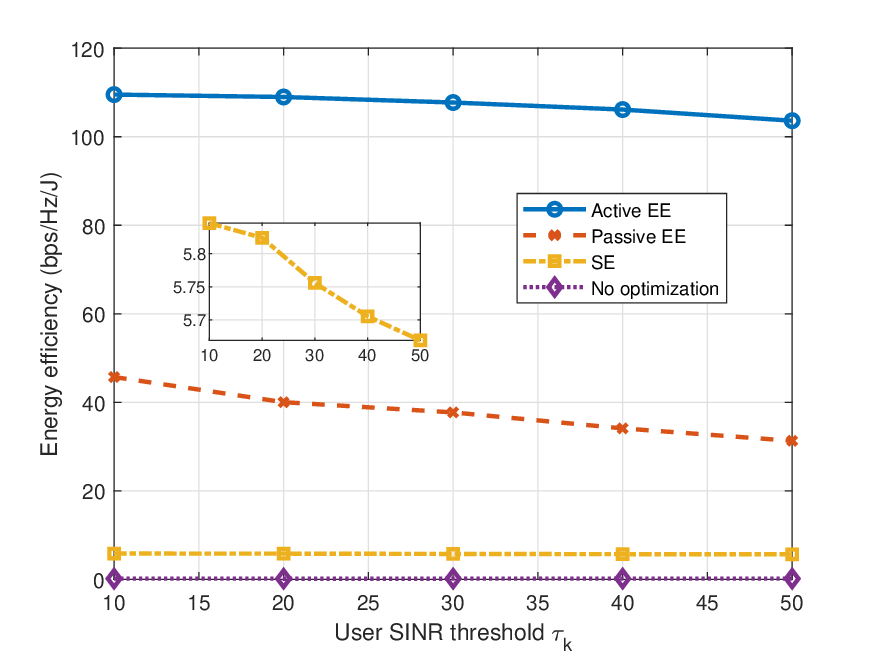}
    \caption{{The effect of the user thresholds $\tau_k$ ($10, 20, 30, 40, 50$)  on EE: EE  versus users' thresholds.}}
    \label{fig:tauk_effect}
\end{figure}

{The impact of different echo SNR thresholds ($\tau_r$) on EE is illustrated in Fig. \ref{fig:echo SNR}, where $\tau_r$ varies in  ($0.5, 1, 5, 10, 15, 20, 25$). This intends to explore the relationships between EE and the sensing performance requirement.} It is observed from the figure that the EE of all scenarios deteriorates with increasing echo SNR requirements ($\tau_r$). This is understandable given the limitations of spectrum and energy resources. When target sensing demands become more stringent, SE decreases under the same energy constraints. However, as expected, the active RIS-assisted ISAC scenario achieves superior EE compared to other benchmarks. Additionally, the ``Active EE'' case exhibits a slower decline compared to the ``Passive EE'' case. This is because active RIS mitigates the multiplicative fading issues, facilitating the echo SNR and user SINR to reach to the predefined thresholds more readily compared to the passive RIS case, particularly for the target echo that undergoes double amplification. Consequently, the echo SNR constraint exerts a lesser influence on the ``Active EE'' case than the ``Passive EE'' case, rendering the proposed approach less sensitive to echo SNR thresholds. {It is also worth mentioning that when the different practical echo SNR ($\gamma_r$) becomes x-axis in the Fig. \ref{fig:echo SNR}, the obtained figure will be similar to that of echo SNR thresholds ($\tau_r$) since the optimized practical echo SNR will approach to the pre-set echo SNR thresholds.}

\begin{figure}[!t]
    \centering
    \includegraphics[scale=0.45]{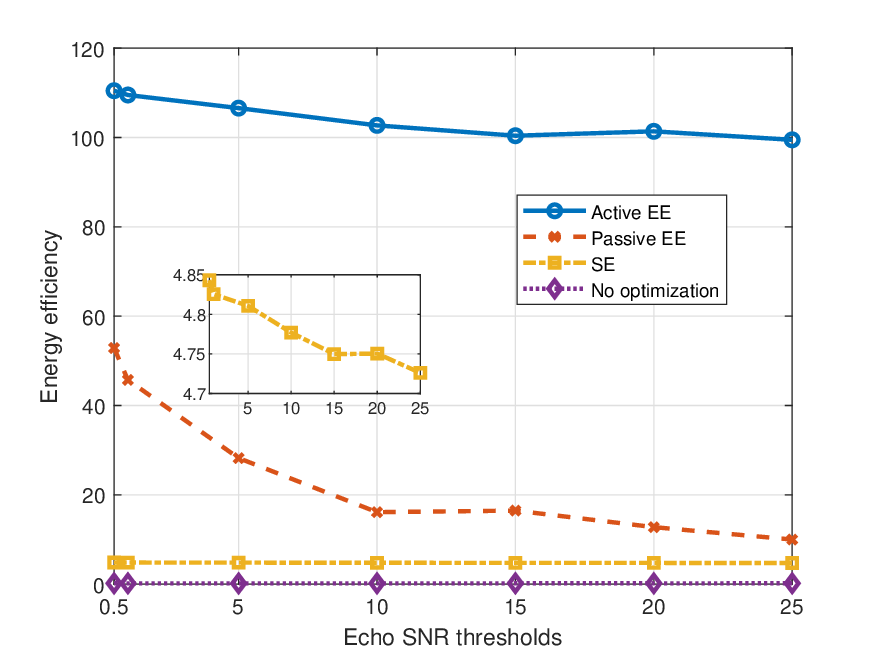}
    \caption{{The effect of different echo SNR thresholds $\tau_r$ ($0.5, 1, 5, 10, 15, 20, 25$) on EE: EE  versus varying echo SNR thresholds.}}
    \label{fig:echo SNR}
\end{figure}

Fig. \ref{fig:amax} illustrates the relationship between EE and the maximum amplification coefficient ($a_{max}$) of active RIS. As $a_{max}$ increases from $2.5$ to $40$, the EE of active RIS-assisted ISAC initially increases before gradually declining. This behavior can be attributed to two primary factors. First, when $a_{max}$ is small and close to $1$, the active RIS essentially reverts to a passive RIS, resulting in EE approaching that of the passive RIS case, as shown in the figure. This is because the energy consumption for active signal processing at the active RIS becomes negligible. Second, for $a_{max}$ values greater than $20$, a gradual decrease of EE is observed. This stems from the increased amplified noise at the active RIS, which contributes to the overall energy consumption. The amplified noise also imposes stricter constraints on the echo SNR and users' SINR, further hindering EE optimization.
 
\begin{figure}[!t]
    \centering
    \includegraphics[scale=0.45]{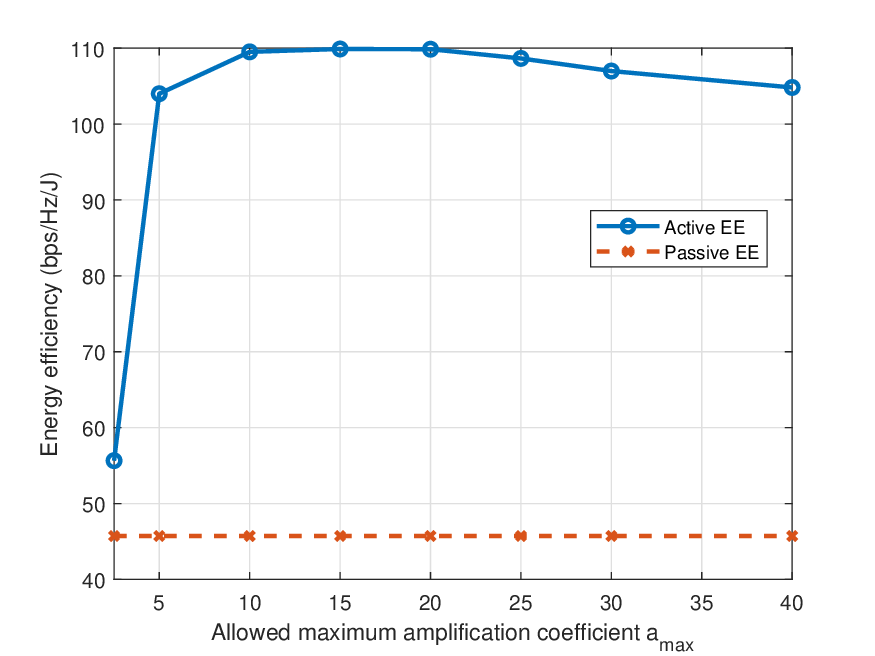}
    \caption{The effect of the maximum amplification coefficient $a_{max}$ ($2.5$, $5$, $10$, $15$, $20$, $25$, $30$, $40$) on the EE: EE versus maximum amplification coefficient.}
    \label{fig:amax}
\end{figure}

\begin{figure}[!t]
    \centering
    \includegraphics[scale=0.45]{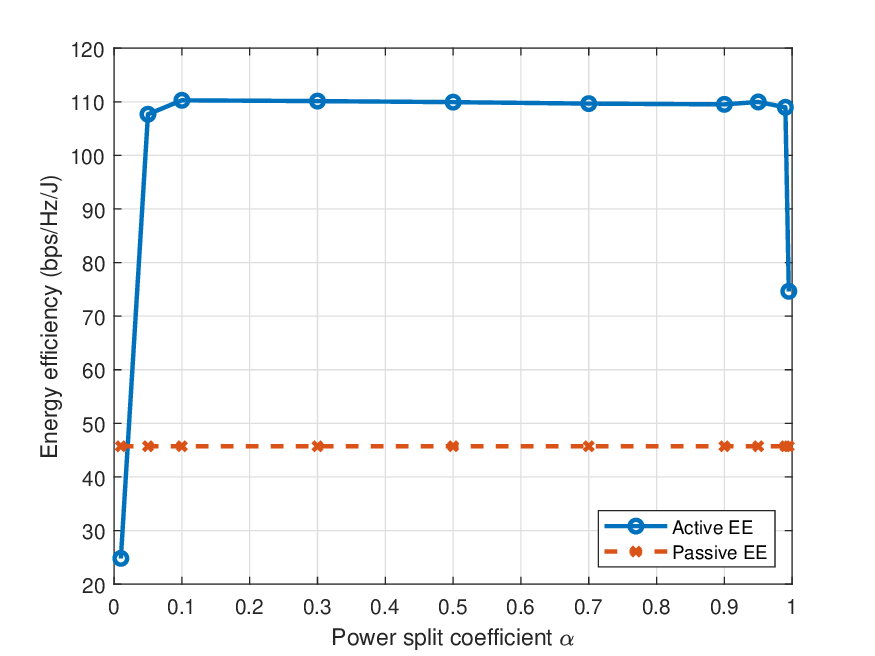}
    \caption{The effect of the power split coefficient $\alpha$ ($0.01$, $0.05$, $0.1$, $0.3$, $0.5$, $0.7$, $0.9$, $0.95$, $0.99$) on the EE: EE versus power split coefficient.}
    \label{fig:alpha}
\end{figure}

The power split between the BS and the active RIS plays a crucial role in determining the EE of the system. Here, we denote $\alpha$ as a power split coefficient of the total power budget, where $\alpha P$ of the power is allocated to the BS while $(1-\alpha) P$ of the power is used in the active RIS. As illustrated in Fig. \ref{fig:alpha}, the EE exhibits a non-monotonic relationship with the power split coefficient $\alpha$. When $\alpha \rightarrow 0$, indicating that most of the energy is allocated to the active RIS, the EE suffers and degrades to a low level. In contrast, when $\alpha$ is neither too large nor too small, the EE stabilizes at around $110$ bps/Hz/J. Finally, as $\alpha \rightarrow 1$, the EE is decreasing and approaching that of the passive RIS case. The observed behaviors can be explained in three phases. \textbf{Phase I where $\alpha \rightarrow 0$: } With a majority of energy allocated to the active RIS, the BS has limited resources to transmit the waveform, leading to a weak transmit signal. Despite the active RIS's amplification capabilities, it cannot fully compensate for the reduced transmission power due to the constrained amplification, resulting in low SE and degraded EE. \textbf{Phase II where $0 < \alpha < 1$: } In this phase, the EE remains relatively stable around 110 bps/Hz/J. This stability stems from the minimization of actual consumed energy during the optimization process, as shown in Fig. \ref{fig:Energy Consumption}. Regardless of the power split coefficient between $0.1$ and $0.9$, the allocated energy becomes redundant, allowing the optimization algorithm to find a solution with consistently low energy consumption. As a result, the optimization process minimizes the actual consumed energy, rendering the initial energy allocation less significant. \textbf{Phase III where $\alpha \rightarrow 1$: } As the power split coefficient approaches 1, RIS behaves more like a passive RIS, leading to a decline of EE. This decrease is attributed to the reduced amplification capabilities of passive RIS compared to active RIS. To summarize, the power split coefficient significantly affects the system EE. While a moderate power split can lead to optimal EE, allocating too much or too little energy to either the BS or the active RIS can result in performance degradation.

\section{Conclusions}\label{conclusions}
In this paper, we explore the EE enhancement of an ISAC system assisted by an active RIS. Through joint optimization of the transceiver beamformer at BS, the amplification matrix, and the reflecting matrix of  active RIS, we maximize system's EE while adhering to power constraints, SINR requirements for both users, and the target echo signal. To solve the optimization problem, an effective alternative optimization algorithm is proposed utilizing the generalized Rayleigh quotient optimization approach, SDR, and MM framework. Simulation results validate that the active RIS-assisted ISAC system achieves significant EE improvement compared to both passive RIS case and SE optimization case.

\begin{appendices}        
\section{PROOF OF Lemma \ref{lm-1}}  \label{app_a}
Consider any Hermitian matrix $\mathbf{A}$ with a maximum eigenvalue $\lambda$. Due to the non-negative nature of all its eigenvalues, $\lambda \mathbf{I}\! -\! \mathbf{A}$ is a positive semidefinite matrix. Consequently, $f(\mathbf{x}) \!= \!\mathbf{x}^H (\lambda \mathbf{I} \!- \! \mathbf{A}) \mathbf{x}$ is a convex function, implying that it exceeds its first-order Taylor expansion according to the first-order condition of convex functions. This can be expressed mathematically as
\begin{align} \label{first-order condition}
    \!\!\mathbf{x}^H (\lambda\mathbf{I}\!-\!\mathbf{A})\mathbf{x}&\!\geq \!\mathbf{x}_s^H (\lambda\mathbf{I}\!-\!\mathbf{A})\mathbf{x}_s\!+ \!(\mathbf{x}\!-\!\mathbf{x}_s)^H\frac{\partial \mathbf{x}^H (\lambda\mathbf{I}\!-\!\mathbf{A})\mathbf{x}}{\partial \mathbf{x}^*}|_{\mathbf{x}\!=\!\mathbf{x}_s} \notag \\
    &+\frac{\partial \mathbf{x}^H (\lambda\mathbf{I}-\mathbf{A})\mathbf{x}}{\partial \mathbf{x}^T}|_{\mathbf{x}=\mathbf{x}_s} (\mathbf{x}-\mathbf{x}_s).
\end{align}
Since
\begin{align}
    &\frac{\partial \mathbf{x}^H (\lambda\mathbf{I}-\mathbf{A})\mathbf{x}}{\partial \mathbf{x}^*}|_{\mathbf{x}=\mathbf{x}_s}=(\lambda\mathbf{I}-\mathbf{A})\mathbf{x}_s, \\
    & \frac{\partial \mathbf{x}^H (\lambda\mathbf{I}-\mathbf{A})\mathbf{x}}{\partial \mathbf{x}^T}|_{\mathbf{x}=\mathbf{x}_s}= \mathbf{x}_s^H(\lambda\mathbf{I}-\mathbf{A})^H,
\end{align}
the inequality (\ref{first-order condition}) can then be written as
\begin{align}
    &\mathbf{x}^H (\lambda\mathbf{I}-\mathbf{A})\mathbf{x}\geq \mathbf{x}_s^H (\lambda\mathbf{I}-\mathbf{A})\mathbf{x}_s \notag\\
    &+(\mathbf{x}-\mathbf{x}_s)^H(\lambda\mathbf{I}-\mathbf{A})\mathbf{x}_s+\mathbf{x}_s^H(\lambda\mathbf{I}-\mathbf{A})(\mathbf{x}-\mathbf{x}_s),
\end{align}
which can be rearranged as
\begin{align}
    \!\!\!\mathbf{x}^H \mathbf{A} \mathbf{x} \!\leq \!\lambda \mathbf{x}^H \mathbf{x} \!+\! \mathbf{x}_s^H (\lambda\mathbf{I}\!-\!\mathbf{A})\mathbf{x}_s \!+\! 2 \mathfrak{R}\!\left\lbrace \mathbf{x}^H(\mathbf{A}\!-\!\lambda\mathbf{I})\mathbf{x}_s\!\right\rbrace.
\end{align}
This ends the proof.

\section{PROOF OF Proposition \ref{prop-1}}  \label{app_b}
The relationship between the maximum eigenvalue of $\lambda_{v1}$ associated with $\mathbf{V}_1$ and the maximum eigenvalue of $\lambda_{v2}$ associated with $\mathbf{V}_2$ can be readily established by considering equation (\ref{nota4}{c}), which gives
\begin{align}
     \lambda_{v1} &= \mu \zeta \lambda_{v2}.
\end{align}

To determine the eigenvalue $\lambda_{v2}$, we first observe $\mathbf{V}_2 \!=\! \overline{\mathbf{H}}_{rt}^T \otimes \overline{\mathbf{W}}_t \!+\! \sigma_0^2 (\overline{\mathbf{H}}_{rt}^T \!\otimes \!\overline{\mathbf{H}}_{rt})$. Taking advantage of two properties of the Kronecker operator, i.e. $k(\mathbf{X}\otimes\!\mathbf{Y})\!=\!(k\mathbf{X})\otimes\mathbf{Y}\!=\!\mathbf{X}\otimes\! (k\mathbf{Y})$ and $\mathbf{X}\!\otimes\!(\mathbf{Y}\!\pm\!\mathbf{Z})\!=\!\mathbf{X}\!\otimes\!\mathbf{Y}\!\pm\!\mathbf{X}\!\otimes\!\mathbf{Z}$, we can re-write $\mathbf{V}_2$ as
\begin{align}
\mathbf{V}_2 = \overline{\mathbf{H}}_{rt}^T \otimes (\overline{\mathbf{W}}_t + \sigma_0^2\overline{\mathbf{H}}_{rt}).
\end{align}
Leveraging the eigenvalue property of the Kronecker operator, it is established that $\lambda_{v2}$, the maximum eigenvalue of $\mathbf{V}_2$, is the product of the maximum eigenvalue of $\overline{\mathbf{H}}_{rt}$ and the maximum eigenvalue of $(\overline{\mathbf{W}}_t \!+\! \sigma_0^2\overline{\mathbf{H}}_{rt})$. Since $\overline{\mathbf{H}}_{rt} \!=\! \operatorname{diag}(\mathbf{h}_{rt}^H) \operatorname{diag}(\mathbf{h}_{rt})$ is a real diagonal matrix, its maximum eigenvalue is simply the maximum value of its diagonal elements, denoted as $\max\{\overline{\mathbf{H}}_{rt}\}$. Regarding $(\overline{\mathbf{W}}_t \!+ \!\sigma_0^2\overline{\mathbf{H}}_{rt})$, considering that $\|\mathbf{W}\|_F^2 \!\gg\! \sigma_0^2$, the elements in $\overline{\mathbf{W}}_t$ are significantly larger than those in $\sigma_0^2\overline{\mathbf{H}}_{rt}$. Consequently, the maximum eigenvalue of $(\overline{\mathbf{W}}_t \!+\! \sigma_0^2\overline{\mathbf{H}}_{rt})$ can be approximated with that of $\overline{\mathbf{W}}_t$, denoted as $\lambda_w$. However, determining the eigenvalue of $\overline{\mathbf{W}}_t$ remains computationally challenging, especially when $N$ is large. Upon closer examination, it can be observed that $\overline{\mathbf{W}}_t$ is rank-deficient since $N \!>\! (M+K) \!>\! M$ in general. To leverage this characteristic, we define $\overline{\mathbf{W}}_{t2} \!= \!\mathbf{W}^H \mathbf{G}^H \operatorname{diag}(\mathbf{h}_{rt}) \operatorname{diag}(\mathbf{h}_{rt}^H) \mathbf{G} \mathbf{W} \!\in \!\mathbb{C}^{(M+K)\times (M+K)}$, which possesses a lower dimension compared to $\overline{\mathbf{W}}_t$ and shares the non-zero eigenvalues with $\overline{\mathbf{W}}_{t}$. Therefore, the maximum eigenvalue $\lambda_w$ of $\overline{\mathbf{W}}_{t2}$ can be efficiently determined by performing eigenvalue decomposition of the lower-dimensional matrix $\overline{\mathbf{W}}_{t2}$. Consequently, the maximum eigenvalue of $\mathbf{V}_2$ can be approximated as
 \begin{align}
     \lambda_{v2} &\approx \lambda_w \max\lbrace \overline{\mathbf{H}}_{rt} \rbrace.
 \end{align}

As for the eigenvalue $\lambda_{v3}$, we first rewrite $\mathbf{V}_3$ as
\begin{align}
    \mathbf{V}_3= (\tau_r \sigma_0^2\overline{\mathbf{H}}_{rt} - \overline{\mathbf{W}}_{t})^T \otimes \overline{\mathbf{U}}_t.
\end{align}
To ascertain the maximum eigenvalue of $\mathbf{V}_3$, we need to determine the maximum eigenvalues of $(\tau_r \sigma_0^2\overline{\mathbf{H}}_{rt}^T \!- \!\overline{\mathbf{W}}_{t}^T)$ and $\overline{\mathbf{U}}_t$, respectively. Firstly, we observe that $\overline{\mathbf{U}}_t \!=\! \mathrm{diag}(\mathbf{h}_{rt}^H) \mathbf{G} \mathbf{u} \mathbf{u}^H \mathbf{G}^H \mathrm{diag}(\mathbf{h}_{rt})$, which clearly reveals that $\overline{\mathbf{U}}_t$ is a rank-one matrix with only one non-zero eigenvalue. Since $\mathrm{tr}(\overline{\mathbf{U}}_t) \!\!= \!\! \sum_{n=1}^{N} \!\lambda_{u,n}$, where $\lambda_{u,n}$ represents all eigenvalues of $\overline{\mathbf{U}}_t$, the maximum eigenvalue of $\overline{\mathbf{U}}_t$ is equal to $\mathrm{tr}(\overline{\mathbf{U}}_t)$. Regarding $\tau_r \sigma_0^2\overline{\mathbf{H}}_{rt} \!- \!\overline{\mathbf{W}}_{t}$, it is a full-rank matrix, implying that all its eigenvalues are non-zero. Consequently, its maximum eigenvalue, denoted as $\lambda_t$, can only be determined through direct eigenvalue decomposition.

Consequently, the eigenvalue $\lambda_{v3}$ can be written as the product of maximum eigenvalue of $\overline{\mathbf{U}}_t$ and maximum eigenvalue of $\overline{\mathbf{H}}_{rt}$, given that 
\begin{align}
    \lambda_{v3} &= \lambda_{t} \rm\textbf{\rm tr}(\overline{\mathbf{U}}_t) .
\end{align}
This ends the proof. 

\end{appendices} 
\bibliographystyle{IEEEtran}
\bibliography{references}{}

\end{document}